\newtheorem{theorem}{Theorem}[section]
\newtheorem{proposition}[theorem]{Proposition}
\theoremstyle{definition}
\newtheorem{definition}[theorem]{Definition}
\theoremstyle{remark}
\newtheorem{example}[theorem]{Example}
\newcommand{\Cluster}{\mathcal{C}}
\newcommand{\System}{\mathfrak{S}}
\newcommand{\NonFaulty}[1]{\mathop{\textsf{nf}}(#1)}
\newcommand{\Faulty}[1]{\mathop{\textsf{f}}(#1)}
\newcommand{\n}[1]{\mathbf{n}_{#1}}
\newcommand{\f}[1]{\mathbf{f}_{#1}}
\newcommand{\nf}[1]{\mathbf{nf}_{#1}}
\newcommand{\Replica}[1][r]{\textsc{#1}}
\newcommand{\Cert}[2]{\langle #1 \rangle_{#2}}
\newcommand{\Partition}[1]{\mathop{\textsf{partition}}(#1)}
\newcommand{\abs}[1]{\lvert #1 \rvert}
\newcommand{\Size}[1]{\lVert #1 \rVert}
\newcommand{\intersect}{\cap}
\newcommand{\difference}{\setminus}
\newcommand{\BigO}[1]{\mathcal{O}(#1)}
\renewcommand{\div}{\operatorname{div}}
\renewcommand{\bmod}{\operatorname{mod}}
\newcommand{\sgn}{\operatorname{sgn}}
\newcommand{\Name}[1]{\textnormal{\texttt{#1}}}
\newenvironment{myprotocol}{
    \hrule
    \smallskip
    \small
    \begin{algorithmic}[1]
        \newcommand{\SPACE}{\item[]}
        \newcommand{\TITLE}[1]{\item[] \textbf{\underline{##1}:}\\[2pt]}
        \makeatletter
            \newcommand{\EVENT}[1]{\STATE \textbf{event} ##1 \textbf{do}\begin{ALC@g}}
            \newcommand{\ENDEVENT}{\end{ALC@g}}
        \makeatother
}{
    \end{algorithmic}
    \smallskip
    \hrule
}
\tikzset{
    >=Stealth,
    dot/.style={circle,scale=0.35,draw=black,fill=black},
    node_text/.append style={font=\strut\bfseries},
    label/.append style={font=\strut\footnotesize},
    decosnake/.append style={decoration={snake,pre length=4pt,post length=6pt,segment length=4,amplitude=.9},decorate}
}
\begin{document}

\title{The fault-tolerant cluster-sending problem\footnote{A brief announcement of this work will be presented at the 33rd International Symposium on Distributed Computing (DISC 2019)~\cite{clussendba}.}}
\author{Jelle Hellings \and Mohammad Sadoghi}
\date{\normalsize{\begin{tabular}{c}
                            Exploratory Systems Lab\\
                            Department of Computer Science\\
                            University of California, Davis\\
                            CA 95616-8562, USA
                \end{tabular}}}

\maketitle

\begin{abstract}
The development of fault-tolerant distributed systems that can tolerate Byzantine behavior has traditionally been focused on consensus protocols, which support fully-replicated designs. For the development of more sophisticated high-performance Byzantine distributed systems, more specialized fault-tolerant communication primitives are necessary, however.

In this paper, we identify an essential communication primitive and study it in depth. In specifics, we formalize the \emph{cluster-sending problem}, the problem of sending a message from one Byzantine cluster to another Byzantine cluster in a reliable manner. We not only formalize this fundamental problem, but also establish lower bounds on the complexity of this problem under crash failures and Byzantine failures. Furthermore, we develop practical cluster-sending protocols that meet these lower bounds and, hence, have optimal complexity. As such, our work provides a strong foundation for the further exploration of novel designs that address challenges encountered in fault-tolerant distributed systems.
\end{abstract}

\section{Introduction}
Recently, the emergence of \emph{blockchain technology} has fueled a renewed interest in the development of fault-tolerant distributed systems in which some of the participating replicas behave malicious~\cite{pwcenergy,christies,impactblock,wurblockfood,hypereal,blockdev,blockplane,encybd,blockhealthover,blockhealthfac,blockeu,promiseblock,ibmgdpr,bitcoin,ethereum,mbftba}. The main focus of current developments is mostly limited to fully-replicated systems in which each participating replica has the same role. The benefit of such a fully-replicated design is that one can rely on readily-available \emph{consensus protocols} to implement such a design~\cite{paxos,pbft,distalgo}.

We envision the design and development of more sophisticated high-performance Byzantine systems in which replicas have specialized roles. An example of such a system would be a \emph{sharded geo-scale design} in which data is kept in \emph{local Byzantine clusters}. In such a sharded geo-scale design, many queries can efficiently be answered by involving only a single cluster. In this way, a sharded design will often improve scalability and performance when dealing with massive large-scale databases~\cite{distsys,dbdist}. For answering more complex queries, we need cooperation between different clusters, however.

Hence, to enable the design and development of such systems, we need reliable ways for Byzantine clusters to communicate and cooperate. We believe that the existing consensus protocols are insufficient to fulfill this aim~\cite{pbft,pbftj, zyzzyva,zyzzyvaj,bft700,rbft,steward,algorand,aardvark,scaling,cheapbft,minbft,fastbft,distalgo}: we can run a single global consensus protocol among all replicas in all clusters to enable sharing of data and queries, but this would be at high---\emph{quadratic}---communication costs for all replicas involved and would eliminate any possible scaling benefits of a clustered design. Indeed, we believe that there is a pressing need for more specialized Byzantine communication primitives. In this paper we formalize one such primitive, the \emph{cluster-sending problem}: the problem of sending a message from one Byzantine cluster to another Byzantine cluster in a reliable manner that is verifiable by all replicas involved. Our main contributions are as follows:
\begin{enumerate}[label=(\arabic*)]
    \item We formalize the cluster-sending problem.
    \item We prove strict lower bounds on the complexity of the cluster-sending problem in terms of the number of messages (when faulty replicas only crash) and the number of signatures (when faulty replicas can be malicious and messages are signed). In both cases, these lower bounds are only \emph{linear} in the size of the clusters involved.
    \item We introduce \emph{bijective sending}, a powerful technique to reliably send messages between clusters of roughly the same size. To generalize bijective sending to arbitrary-sized clusters, we introduce \emph{partitioned bijective sending} techniques.
    \item For many practical environments, we develop optimal cluster-sending protocols that use (partitioned) bijective sending and whose complexity matches the lower bounds established. A full overview of all the environmental conditions we study and corresponding protocols we propose can be found in Figure~\ref{fig:overview}.
\end{enumerate}

\begin{figure}
    \centering
    \makebox[0pt]{
        \begin{tabular}{l|| lllll}
            &Protocol&System&Robustness&Messages&Message size\\
        \hline
        \hline
           \multirow{4}{*}{\rotatebox[origin=c]{90}{non-linear}}
            &\Name{RB-bcs}&Omit   &$\n{\Cluster_1} > 2\f{\Cluster_1}$, $\n{\Cluster_2} > \f{\Cluster_2}$&$(\f{\Cluster_1} + 1)\cdot(\f{\Cluster_2} + 1)$&$\BigO{\Size{v}}$\\
            &\Name{RB-brs}&Byzantine, RS&$\n{\Cluster_1} > 2\f{\Cluster_1}$, $\n{\Cluster_2} > \f{\Cluster_2}$&$(2\f{\Cluster_1} + 1)\cdot(\f{\Cluster_2} + 1)$&$\BigO{\Size{v}}$\\
            &\Name{RB-bcs}&Byzantine, RS&$\n{\Cluster_1} > 2\f{\Cluster_1}$, $\n{\Cluster_2} > \f{\Cluster_2}$&$(\f{\Cluster_1} + 1)\cdot(\f{\Cluster_2} + 1)$&$\BigO{\Size{v} + \f{\Cluster_1}}$\\
            &\Name{RB-bcs}&Byzantine, CS&$\n{\Cluster_1} > 2\f{\Cluster_1}$, $\n{\Cluster_2} > \f{\Cluster_2}$&$(\f{\Cluster_1} + 1)\cdot(\f{\Cluster_2} + 1)$&$\BigO{\Size{v}}$\\
        \hline
           \multirow{4}{*}{\rotatebox[origin=c]{90}{linear}}
            &\Name{PBS-bcs}  &Omit &$\n{\Cluster_1} > 3\f{\Cluster_1}$, $\n{\Cluster_2} > 3\f{\Cluster_2}$&$\BigO{\max(\n{\Cluster_1}, \n{\Cluster_2})}$  (optimal)&$\BigO{\Size{v}}$\\
            &\Name{PBS-brs}&Byzantine, RS&$\n{\Cluster_1} > 4\f{\Cluster_1}$, $\n{\Cluster_2} > 4\f{\Cluster_2}$&$\BigO{\max(\n{\Cluster_1}, \n{\Cluster_2})}$  (optimal)&$\BigO{\Size{v}}$\\
            &\Name{PBS-bcs}&Byzantine, RS&$\n{\Cluster_1} > 3\f{\Cluster_1}$, $\n{\Cluster_2} > 3\f{\Cluster_2}$&$\BigO{\max(\n{\Cluster_1}, \n{\Cluster_2})}$  &$\BigO{\Size{v}+ \f{\Cluster_1}}$\\
            &\Name{PBS-bcs}&Byzantine, CS&$\n{\Cluster_1} > 3\f{\Cluster_1}$, $\n{\Cluster_2} > 3\f{\Cluster_2}$&$\BigO{\max(\n{\Cluster_1}, \n{\Cluster_2})}$  (optimal)&$\BigO{\Size{v}}$
        \end{tabular}
    }
    \caption{Overview of cluster-sending protocols that sends a value $v$ from cluster $\Cluster_1$ to cluster $\Cluster_2$. In the above, RS is shorthand for \emph{replica signing}, CS is a shorthand for \emph{cluster signing}, and \Name{PBS} is a shorthand for the relevant instances of \Name{BS}, \Name{SPBS}, and \Name{RPBS}, which are protocols that use (partitioned) bijective sending.}\label{fig:overview}
\end{figure}

\paragraph{Organization}
In Section~\ref{sec:formal}, we introduce the terminology used throughout this paper and formally define the \emph{cluster-sending problem}. In Section~\ref{sec:rb}, we show how to use \emph{reliable broadcasting} as straightforward basic technique to solve the cluster-sending problem in all possible settings. Next, in Section~\ref{sec:lower}, we prove \emph{lower bounds} on the complexity of the cluster-sending problem. Then, in Section~\ref{sec:bijective}, we introduce \emph{bijective sending}, a powerful cluster-sending technique that performs cluster-sending with minimal communication between clusters of comparable sizes in which a minority of all replicas are faulty. Next, in Section~\ref{sec:partitioned}, we introduce \emph{partition techniques} that allow for the generalization of bijective sending to clusters of arbitrary sizes. Finally, in Section~\ref{sec:conclude}, we conclude on our findings and discuss avenues for future work.

\section{Formalizing the cluster-sending problem}\label{sec:formal}

A \emph{cluster} $\Cluster$ is a set of replicas. We write $\Faulty{\Cluster} \subseteq \Cluster$ to denote the set of \emph{faulty replicas} in $\Cluster$ and $\NonFaulty{\Cluster} = \Cluster \difference \Faulty{\Cluster}$ to denote the set of \emph{non-faulty replicas} in $\Cluster$. We write $\n{\Cluster} = \abs{\Cluster}$, $\f{\Cluster} = \abs{\Faulty{\Cluster}}$, and  $\nf{\Cluster} = \abs{\NonFaulty{\Cluster}}$ to denote the number of replicas, faulty replicas, and non-faulty replicas in the cluster, respectively.  We extend the notations $\Faulty{\cdot}$, $\NonFaulty{\cdot}$, $\n{(\cdot)}$, $\f{(\cdot)}$, and $\nf{(\cdot)}$ to arbitrary sets of replicas. In this work, we consider faulty replicas that can \emph{crash}, \emph{omit} messages, or behave \emph{Byzantine}. A \emph{crashing replica} executes steps correctly up till some point, after which it does not execute anything. An \emph{omitting replica} executes steps correctly, but can decide to not send a message when it should or decide to ignore messages it receives. A \emph{Byzantine replica} can behave in arbitrary, possibly coordinated and malicious, manners.

A \emph{cluster system} $\System$ is a finite set of clusters such that communication between replicas in a cluster is \emph{local} and communication between clusters is \emph{non-local}. We assume that there is no practical bound on local communication (e.g., within a single data center rack), while global communication is limited, costly, and to be avoided (e.g., between data centers in different continents). If $\Cluster_1, \Cluster_2 \in \System$ are distinct clusters, then we assume that $\Cluster_1 \intersect \Cluster_2 = \emptyset$: no replica is part of two distinct clusters.

\begin{definition}\label{def:csp}
Let $\System$ be a system and $\Cluster_1, \Cluster_2 \in \System$ be two clusters with non-faulty replicas ($\NonFaulty{\Cluster_1} \neq \emptyset$ and $\NonFaulty{\Cluster_2} \neq \emptyset$). The \emph{cluster-sending problem} is the problem of sending a value $v$ from $\Cluster_1$ to $\Cluster_2$ such that:
\begin{enumerate}
\item all non-faulty replicas in $\Cluster_2$ \emph{receive} the value $v$;
\item only if all non-faulty replicas in $\Cluster_1$ \emph{agree} upon sending the value $v$ to $\Cluster_2$ will non-faulty replicas in $\Cluster_2$ receive $v$; and
\item all non-faulty replicas in $\Cluster_1$ can \emph{confirm} that the value $v$ was received.
\end{enumerate}
\end{definition}

In the following, we assume \emph{asynchronous reliable communication}: all messages send by non-faulty replicas eventually arrive at their destination. None of the protocols we propose rely on message delivery timings for their correctness.  Let $\Cluster \in \System$ be a cluster and $\Replica \in \Cluster$ be a replica. We assume that, on receipt of a message $m$ from replica $\Replica$, one can determine that $\Replica$ did sent $m$ if $\Replica \in \NonFaulty{\Cluster}$; and one can only determine that $m$ was sent by a non-faulty replica if $\Replica \in \NonFaulty{\Cluster}$. Hence, faulty replicas are able to impersonate each other, but are not able to impersonate non-faulty replicas. We study the \emph{cluster-sending problem} for Byzantine systems in two different types of environments:
\begin{enumerate}
\item A system provides \emph{replica signing} if every replica $\Replica$ can \emph{sign} arbitrary messages $m$, resulting in a certificate $\Cert{m}{\Replica}$. These certificates are non-forgeable and can be constructed only if $\Replica$ cooperates in constructing them. Based on only the certificate $\Cert{m}{\Replica}$, anyone can verify that $m$ was originally supported by $\Replica$ (unless $\Replica \in \Faulty{\Cluster}$). 
\item A system provides \emph{cluster signing} if it is equipped with a \emph{signature scheme} that can be used to \emph{cluster-sign} arbitrary messages $m$, resulting in a certificate $\Cert{m}{\Cluster}$. These certificates are non-forgeable and can be constructed only if all non-faulty replicas in $\NonFaulty{\Cluster}$ cooperate in constructing them. Based on only the certificate $\Cert{m}{\Cluster}$, anyone can verify that $m$ was originally supported by all non-faulty replicas in $\Cluster$.
\end{enumerate}

In practice, \emph{replica signing} can be implemented using \emph{digital signatures}, which rely on a public-key cryptography infrastructure~\cite{hac}, and \emph{cluster signing} can be implemented using threshold signatures, which are available for some public-key cryptography infrastructures~\cite{rsasign}. Let $m$ be a message, $\Cluster \in \System$ a cluster, and $\Replica \in \Cluster$ a replica. We write $\Size{v}$ to denote the size of any arbitrary value $v$. We assume that the size of certificates $\Cert{m}{\Replica}$, obtained via replica signing, and certificates $\Cert{m}{\Cluster}$, obtained via cluster signing, are both linearly upper-bounded by $\Size{m}$. More specifically, $\Size{(m, \Cert{m}{\Replica})} = \BigO{\Size{m}}$ and $\Size{(m, \Cert{m}{\Cluster})} = \BigO{\Size{m}}$.

We notice that \emph{cluster signing} can be \emph{emulated using replica signing}. If $\Cluster \in \System$ is a cluster, $m$ is a message, and non-faulty replicas in $\NonFaulty{\Cluster}$ only provide certificates $\Cert{m}{\Replica}$ if there is consensus on doing so among all non-faulty replicas in $\NonFaulty{\Cluster}$, then the set $\{ \Cert{m}{\Replica} \mid \Replica \in S \}$, for any set $S \subseteq \Cluster$ with $\abs{S} = \f{\Cluster} + 1$, can be used in the same manner as a cluster certificate $\Cert{m}{\Cluster}$. In this case, we have $\Size{(m, \Cert{m}{\Cluster})} =  \BigO{\Size{m} + \f{\Cluster}}$, however. If we assume only crash or omission failures, then no replica will ever try to forge messages of other replicas or send messages outside the scope of the relevant protocol. Hence, in this setting, \emph{replica signing} or \emph{cluster signing} does not add any reliability, implying there is no need for certificates. In this case, we simply emulate replica or cluster certificates by not including them.

When necessary, we assume that replicas in each cluster $\Cluster \in \System$ can reach agreement on a value using an off-the-shelf \emph{consensus protocol}~\cite{pbft,pbftj, zyzzyva,zyzzyvaj,bft700,rbft,steward,algorand,aardvark,scaling,cheapbft,minbft,fastbft,distalgo}. In the best case, when we only have crash failures or when we have synchronous communication and replica signing (within a cluster), these protocols require $\n{\Cluster} > 2\f{\Cluster}$, which we assume to be the case for all \emph{sending clusters}.\footnote{Strictly speaking there exist synchronous authenticated consensus protocols that can reach agreement on a value among all non-faulty replicas even if $\n{\Cluster} \leq 2\f{\Cluster}$, e.g.~\cite{consbound,dolevstrong}. Unfortunately, an outside observer---including other clusters---will never be able to reliable learn this value, as it will never be able to distinguish between the faulty and the non-faulty replicas.}

In this paper, we use the notation $i\sgn{j}$, with $i, j \geq 0$ and $\sgn$ the sign function, to denote $i$ if $j > 0$ and $0$ otherwise.

\section{Cluster-sending via reliable broadcasts}\label{sec:rb}
A principle technique used by consensus protocols to guarantee agreement of non-faulty replicas is \emph{message broadcasting} (e.g., as used in \Name{Paxos} and \Name{Pbft}~\cite{paxos,paxossimple,pbft,pbftj}). We can use message broadcasting in the construction of simple cluster-sending protocols, which can be used as a baseline for comparisons. 

First, we present a broadcast-based protocol that can operate in a system with Byzantine failures and cluster certificates. In this protocol, cluster $\Cluster_1$ uses a consensus protocol to reach agreement on a value $v$. Then, a set $S_1 \subseteq \Cluster_1$ of $\f{\Cluster_1}+1$ replicas in $\Cluster_1$ and a set $S_2 \subseteq \Cluster_2$ of $\f{\Cluster_2}+1$ replicas in $\Cluster_2$ are chosen. Finally, each replica in $S_1$ is instructed to broadcast $v$ to all replicas in $S_2$. Due to the size of $S_1$ and $S_2$, at least one non-faulty replica in $\Cluster_1$ will send a value to a non-faulty replica in $\Cluster_2$, which is sufficient to bootstrap \emph{receipt} and \emph{confirmation} of $v$ in $\Cluster_2$.  The pseudo-code for this protocol, named \Name{RB-bcs}, can be found in Figure~\ref{fig:rb_bcs}. Next, we prove the correctness of \Name{RB-bcs}:

\begin{figure}[t!]
    \begin{minipage}{0.7\textwidth}
        \begin{myprotocol}
            \TITLE{Protocol for the sending cluster $\Cluster_1$}
            \STATE Agree on $v$ and distribute $(v, \Cert{v}{\Cluster_1})$ to all non-faulty replicas in $\Cluster_1$.
            \STATE Choose replicas $S_1 \subseteq \Cluster_1$ with $\n{S_1} = \f{\Cluster_1} + 1$.
            \STATE Choose replicas $S_2 \subseteq \Cluster_2$ with $\n{S_2} = \f{\Cluster_2} + 1$.
            \FOR{$\Replica_1 \in S_1$}
                \FOR{$\Replica_2 \in S_2$}
                    \STATE $\Replica_1$ sends $(v, \Cert{v}{\Cluster_1})$ to $\Replica_2$.\label{fig:rb_bcs:bc}
                \ENDFOR
            \ENDFOR
            \SPACE
            \TITLE{Protocol for the receiving cluster $\Cluster_2$}
            \EVENT{$\Replica_2 \in \NonFaulty{\Cluster_2}$ receives $(w, \Cert{w}{\Cluster_1})$ from a replica in $\Cluster_1$}\label{fig:rb_bcs:cluster_receipt}
                \STATE Broadcast $(w, \Cert{w}{\Cluster_1})$ to all replicas in $\Cluster_2$.
            \ENDEVENT
            \EVENT{$\Replica_2' \in \NonFaulty{\Cluster_2}$ receives $(w, \Cert{w}{\Cluster_1})$ from a replica in $\Cluster_2$}\label{fig:rb_bcs:local_receipt}
                \STATE $\Replica_2'$ considers $w$ \emph{received}.
            \ENDEVENT
        \end{myprotocol}
    \end{minipage}
    \hfill
    \begin{minipage}{0.285\textwidth}
        \caption{\Name{RB-bcs}, the reliable broadcast cluster-sending protocol that sends a value $v$ from $\Cluster_1$ to $\Cluster_2$. We assume Byzantine failures and a system that provides cluster signing.}\label{fig:rb_bcs}
    \end{minipage}
\end{figure}

\begin{proposition}\label{prop:rb_bcs}
Let $\System$ be a system with Byzantine failures and cluster signing and let $\Cluster_1, \Cluster_2 \in \System$ . If $\n{\Cluster_1} > 2\f{\Cluster_1}$ and $\n{\Cluster_2} > \f{\Cluster_2}$, then \Name{RB-bcs} satisfies Definition~\ref{def:csp}. The protocol sends $(\f{\Cluster_1} + 1)\cdot(\f{\Cluster_2} + 1)$ messages, of size $\BigO{\Size{v}}$ each, between $\Cluster_1$ and $\Cluster_2$.
\end{proposition}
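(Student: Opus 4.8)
The plan is to verify the three conditions of Definition~\ref{def:csp} in turn, and then separately count messages and bound their size. The key structural fact driving everything is the pigeonhole-style argument foreshadowed in the text: since $\n{S_1} = \f{\Cluster_1} + 1$, the set $S_1$ contains at least one non-faulty replica of $\Cluster_1$; likewise, since $\n{S_2} = \f{\Cluster_2} + 1$, the set $S_2$ contains at least one non-faulty replica of $\Cluster_2$. Consequently, at least one non-faulty replica $\Replica_1 \in S_1 \intersect \NonFaulty{\Cluster_1}$ sends $(v, \Cert{v}{\Cluster_1})$ to at least one non-faulty replica $\Replica_2 \in S_2 \intersect \NonFaulty{\Cluster_2}$, and by the assumption of asynchronous reliable communication this message eventually arrives.

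For the \emph{receipt} condition, I would argue as follows. The non-faulty $\Replica_2$ identified above receives $(v, \Cert{v}{\Cluster_1})$, triggers the event on line~\ref{fig:rb_bcs:cluster_receipt}, and broadcasts it locally to all of $\Cluster_2$. Every non-faulty replica in $\Cluster_2$ then receives this from a replica in $\Cluster_2$, triggers line~\ref{fig:rb_bcs:local_receipt}, and considers $v$ received; this discharges condition~(1). For the \emph{agreement} condition~(2), the crucial point is that no non-faulty replica in $\Cluster_2$ will accept a value $w \neq v$: acceptance requires a valid cluster certificate $\Cert{w}{\Cluster_1}$, which, by the unforgeability of cluster signatures, can only exist if all non-faulty replicas in $\Cluster_1$ cooperated in signing $w$---that is, only if they agreed on sending $w$. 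Here I would lean on the consensus step (line~1) together with $\n{\Cluster_1} > 2\f{\Cluster_1}$, which guarantees the sending cluster can reach and certify a single agreed value. The \emph{confirmation} condition~(3) I would treat by noting the symmetry: once $v$ is received, the non-faulty replicas in $\Cluster_2$ can produce the corresponding acknowledgment, and since a non-faulty pair $(\Replica_1, \Replica_2)$ communicates, $\Cluster_1$ learns of receipt; alternatively, since $\nf{\Cluster_2} \geq 1$ and reliable communication holds, confirmation is guaranteed to propagate back.

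The complexity claims are direct. The double loop over $S_1$ and $S_2$ issues exactly $\n{S_1} \cdot \n{S_2} = (\f{\Cluster_1} + 1)(\f{\Cluster_2} + 1)$ inter-cluster messages on line~\ref{fig:rb_bcs:bc}; the local re-broadcast within $\Cluster_2$ is intra-cluster and therefore not counted against the non-local budget. Each message carries $(v, \Cert{v}{\Cluster_1})$, and since the excerpt stipulates $\Size{(m, \Cert{m}{\Cluster})} = \BigO{\Size{m}}$ for cluster signing, each message has size $\BigO{\Size{v}}$, as claimed.

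The main obstacle I anticipate is condition~(2), the \emph{agreement} requirement, since it is the only part that genuinely rules out faulty behavior rather than merely exploiting the existence of an honest party. The delicate issue is ensuring that Byzantine replicas in $\Cluster_1$ cannot smuggle a divergent value into $\Cluster_2$ that some non-faulty replica would accept. The resolution hinges entirely on the semantics of cluster signing---that $\Cert{w}{\Cluster_1}$ is forgeable only with the participation of \emph{all} non-faulty replicas of $\Cluster_1$---so I would make sure to state explicitly that any accepted message must bear such a certificate and that no two distinct certified values can coexist, given that the non-faulty replicas agreed on a single $v$ before any signing occurred.
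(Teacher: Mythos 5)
Your treatment of receipt, agreement, and the complexity counts coincides with the paper's proof: the pigeonhole facts $\nf{S_1} \geq 1$ and $\nf{S_2} \geq 1$, the local re-broadcast chain through Lines~\ref{fig:rb_bcs:cluster_receipt} and~\ref{fig:rb_bcs:local_receipt}, agreement from the non-forgeability of $\Cert{v}{\Cluster_1}$, and the count $\n{S_1} \cdot \n{S_2} = (\f{\Cluster_1}+1)\cdot(\f{\Cluster_2}+1)$ from Line~\ref{fig:rb_bcs:bc} are exactly the paper's argument. The genuine problem is your handling of \emph{confirmation}. You propose that the non-faulty replicas in $\Cluster_2$ ``produce the corresponding acknowledgment'' so that $\Cluster_1$ ``learns of receipt.'' But \Name{RB-bcs} has no acknowledgment phase: the pseudo-code in Figure~\ref{fig:rb_bcs} contains no event in which any replica of $\Cluster_2$ sends anything back to $\Cluster_1$, and if such messages were added they would be inter-cluster messages, contradicting the exact message count you correctly derived. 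Your fallback phrasing, that confirmation is ``guaranteed to propagate back,'' has the same defect---nothing propagates back in this protocol.

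The correct argument, implicit in the paper's proof (which disposes of receipt and confirmation in a single sentence) and made explicit in the discussion of Section~\ref{sec:conclude}, is that confirmation is obtained \emph{a priori} rather than via feedback. Under asynchronous reliable communication, the chain of events you established for receipt is guaranteed to occur: every non-faulty replica in $\Cluster_1$ knows, already at the moment the protocol's sends are issued, that every replica in $\NonFaulty{S_2}$ will receive $(v, \Cert{v}{\Cluster_1})$, will re-broadcast it within $\Cluster_2$, and hence that all of $\NonFaulty{\Cluster_2}$ will consider $v$ received. Confirmation is thus a logical consequence of the receipt proof together with network reliability, and it costs zero messages; the protocols in this paper are deliberately one-way for precisely this reason. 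Read literally, your acknowledgment-based argument would either fail to match the protocol as specified or break its stated complexity.
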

\begin{proof}
Choose $S_1 \subseteq \Cluster_1$ and $S_2 \subseteq \Cluster_2$ in accordance with \Name{RB-bcs}. We have $\n{S_1} = \f{\Cluster_1} + 1$ and $\n{S_2} = \f{\Cluster_2} + 1$. By construction, we have $\nf{S_1} \geq 1$ and $\nf{S_2} \geq 1$. Due to Line~\ref{fig:rb_bcs:bc}, each replica $\Replica_2 \in \NonFaulty{S_2}$ will receive the message $(v, \Cert{v}{\Cluster_1})$ from every replica in $\NonFaulty{S_1}$. As $\nf{S_1} \geq 1$, every $\Replica_2 \in \NonFaulty{S_2}$ will meet the condition at Line~\ref{fig:rb_bcs:cluster_receipt} and broadcast $(v, \Cert{v}{\Cluster_1})$ to all replicas in $\Cluster_2$. As $\nf{S_2} \geq 1$, each replica $\Replica_2' \in  \NonFaulty{\Cluster_2}$ will meet the condition at Line~\ref{fig:rb_bcs:local_receipt}, proving \emph{receipt} and \emph{confirmation}. We have \emph{agreement}, as $\Cert{v}{\Cluster_1}$ is non-forgeable.
\end{proof}

As replica signing can emulate cluster signing, \Name{RB-bcs} can also be used for systems with only replica signing. Such an emulated solution does require large messages whose size depends on the size of the sending cluster, however. For systems with replica signing we can improve on \Name{RB-bsv} in another manner. We propose \Name{RB-brs}, for which the pseudo-code can be found in Figure~\ref{fig:rb_brs}. Next, we prove the correctness of \Name{RB-brs}:

\begin{figure}[t!]
    \begin{minipage}{0.7\textwidth}
        \begin{myprotocol}
        \TITLE{Protocol for the sending cluster $\Cluster_1$}
        \STATE Agree on $v$ and distribute $v$ to all non-faulty replicas in $\Cluster_1$.
        \STATE Choose replicas $S_1 \subseteq \Cluster_1$ with $\n{S_1} = 2\f{\Cluster_1} + 1$.
        \STATE Choose replicas $S_2 \subseteq \Cluster_2$ with $\n{S_2} = \f{\Cluster_2} + 1$.
        \FOR{$\Replica_1 \in S_1$}
            \FOR{$\Replica_2 \in S_2$}
                \STATE $\Replica_1$ sends $(v, \Cert{v}{\Replica_1})$ to $\Replica_2$.\label{fig:rb_brs:bc}
            \ENDFOR
        \ENDFOR
        \SPACE
        \TITLE{Protocol for the receiving cluster $\Cluster_2$}
        \EVENT{$\Replica_2 \in \NonFaulty{\Cluster_2}$ receives $(w, \Cert{w}{\Replica_1'})$ from a replica $\Replica_1' \in \Cluster_1$}\label{fig:rb_brs:cluster_receipt}
            \STATE Broadcast $(w, \Cert{w}{\Replica_1'})$ to all replicas in $\Cluster_2$.
        \ENDEVENT
        \EVENT{$\Replica_2' \in \NonFaulty{\Cluster_2}$ receives $\f{\Cluster_1} + 1$ messages $(w, \Cert{w}{\Replica_1'})$ such that:
            \begin{enumerate}[nosep,label=(\alph*)]
                \item each message is sent by a replica in $\Cluster_2$; and
                \item each message includes a $\Cert{w}{\Replica_1'}$ from distinct replicas $\Replica_1' \in \Cluster_1$
            \end{enumerate}
        }\label{fig:rb_brs:local_receipt}
            \STATE $\Replica_2'$ considers $w$ \emph{received}.
        \ENDEVENT
        \end{myprotocol}
    \end{minipage}
    \hfill
    \begin{minipage}{0.285\textwidth}
        \caption{\Name{RB-brs}, the reliable broadcast cluster-sending protocol that sends a value $v$ from $\Cluster_1$ to $\Cluster_2$. We assume Byzantine failures and a system that provides replica signing.}\label{fig:rb_brs}
    \end{minipage}
\end{figure}

\begin{proposition}\label{prop:rb_brs}
Let $\System$ be a system with Byzantine failures and replica signing and let $\Cluster_1, \Cluster_2 \in \System$. If $\n{\Cluster_1} > 2\f{\Cluster_1}$ and $\n{\Cluster_2}> \f{\Cluster_2}$, then \Name{RB-brs} satisfies Definition~\ref{def:csp}. The protocol sends $(2\f{\Cluster_1} + 1)\cdot(\f{\Cluster_2} + 1)$ messages, of size $\BigO{\Size{v}}$ each, between $\Cluster_1$ and $\Cluster_2$.
\end{proposition}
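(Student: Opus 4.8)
The plan is to mirror the structure of the proof of Proposition~\ref{prop:rb_bcs}, but to replace the single non-forgeable cluster certificate $\Cert{v}{\Cluster_1}$ by a quorum of $\f{\Cluster_1}+1$ distinct replica certificates, which is the device that recovers the \emph{agreement} guarantee in the replica-signing setting. First I would fix $S_1 \subseteq \Cluster_1$ and $S_2 \subseteq \Cluster_2$ as chosen by \Name{RB-brs}, and record the two counting facts that drive everything: since at most $\f{\Cluster_1}$ replicas of $\Cluster_1$ are faulty and $\n{S_1} = 2\f{\Cluster_1}+1$, we have $\nf{S_1} \geq \f{\Cluster_1}+1$; and since $\n{S_2} = \f{\Cluster_2}+1$, we have $\nf{S_2} \geq 1$.

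For \emph{receipt} and \emph{confirmation}, I would argue via a single non-faulty relay in $S_2$. By Line~\ref{fig:rb_brs:bc}, each of the at least $\f{\Cluster_1}+1$ non-faulty replicas in $S_1$ sends its certified copy $(v, \Cert{v}{\Replica_1})$ to every replica of $S_2$; hence each $\Replica_2 \in \NonFaulty{S_2}$ eventually collects $\f{\Cluster_1}+1$ messages carrying certificates from distinct (non-faulty) replicas of $\Cluster_1$, and by the event at Line~\ref{fig:rb_brs:cluster_receipt} rebroadcasts each of them to all of $\Cluster_2$. Because $\nf{S_2} \geq 1$, at least one such relay exists, so every $\Replica_2' \in \NonFaulty{\Cluster_2}$ eventually receives $\f{\Cluster_1}+1$ messages that are all sent by a replica of $\Cluster_2$ and that carry certificates on $v$ from distinct replicas of $\Cluster_1$, satisfying the condition of Line~\ref{fig:rb_brs:local_receipt}. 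This yields receipt at every non-faulty replica of $\Cluster_2$ and, in turn, confirmation in $\Cluster_1$.

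The agreement property is where the argument genuinely departs from the cluster-signing case, and it is the step I expect to carry the real weight. Suppose some $\Replica_2' \in \NonFaulty{\Cluster_2}$ considers a value $w$ received. By Line~\ref{fig:rb_brs:local_receipt} it has seen certificates $\Cert{w}{\Replica_1'}$ from $\f{\Cluster_1}+1$ distinct replicas $\Replica_1' \in \Cluster_1$. As $\Cluster_1$ contains at most $\f{\Cluster_1}$ faulty replicas, at least one of these $\f{\Cluster_1}+1$ signers lies in $\NonFaulty{\Cluster_1}$. Since replica certificates are non-forgeable and a non-faulty replica of $\Cluster_1$ certifies only the value it agreed upon, that certificate forces $w = v$ --- even though the relaying of the messages may have passed through Byzantine replicas of $\Cluster_2$, which cannot manufacture a certificate attributed to a non-faulty signer. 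Hence no value other than the agreed $v$ can ever be received, giving agreement.

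Finally, the complexity claim is a direct count: the only inter-cluster traffic is generated at Line~\ref{fig:rb_brs:bc}, one message from each of the $\n{S_1} = 2\f{\Cluster_1}+1$ senders to each of the $\n{S_2} = \f{\Cluster_2}+1$ recipients, for a total of $(2\f{\Cluster_1}+1)(\f{\Cluster_2}+1)$ messages; each has size $\Size{(v, \Cert{v}{\Replica_1})} = \BigO{\Size{v}}$ by the bound assumed on replica-signed certificates. The main obstacle is the quorum-intersection reasoning for agreement, specifically ensuring that the \emph{distinctness} requirement in Line~\ref{fig:rb_brs:local_receipt} is exactly what guarantees a non-faulty signer, and that relaying through $\Cluster_2$ does not let Byzantine replicas dilute this guarantee.
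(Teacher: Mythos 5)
Your proposal is correct and follows essentially the same route as the paper's proof: identical counting facts ($\nf{S_1} \geq \f{\Cluster_1}+1$, $\nf{S_2} \geq 1$), the same relay argument for \emph{receipt} and \emph{confirmation}, and the same message count. Your \emph{agreement} argument is merely the contrapositive of the paper's (you show $\f{\Cluster_1}+1$ distinct signers force a non-faulty signer of $w$, the paper shows a non-agreed value can gather at most $\f{\Cluster_1}$ distinct certificates and so never meets the condition at Line~\ref{fig:rb_brs:local_receipt}); both hinge on the same pigeonhole over $\Faulty{\Cluster_1}$ and the non-forgeability of replica certificates.
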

\begin{proof}
Choose $S_1 \subseteq \Cluster_1$ and $S_2 \subseteq \Cluster_2$ in accordance with \Name{RB-brs}. We have $\n{S_1} = 2\f{\Cluster_1} + 1$ and $\n{S_2} = \f{\Cluster_2} + 1$. By construction, we have $\nf{S_1} \geq \Faulty{\Cluster_1} + 1$ and $\nf{S_2} \geq 1$. Due to Line~\ref{fig:rb_brs:bc}, each replica $\Replica_2 \in \NonFaulty{S_2}$ will receive messages $(v, \Cert{v}{\Replica_1})$ from every replica in $\Replica_1 \in \NonFaulty{T_1}$. Hence, $\Replica_2$ will meet the condition at Line~\ref{fig:rb_brs:cluster_receipt} for each such message $(v, \Cert{v}{\Replica_1})$ and broadcast these messages to all replicas in $\Cluster_2$. As $\nf{S_2} \geq 1$ and $\nf{S_1} \geq \f{\Cluster_1} + 1$, each replica $\Replica_2' \in \NonFaulty{\Cluster_2}$ will meet the condition at Line~\ref{fig:rb_brs:local_receipt}, proving \emph{receipt} and \emph{confirmation}.

To prove \emph{agreement}, we show that only values agreed upon by $\Cluster_1$ will be considered received by non-faulty replicas in $\NonFaulty{\Cluster_2}$. Consider a value $v'$ not agreed upon by $\Cluster_1$. Hence, only the replicas in $\Faulty{\Cluster_1}$ will sign $v'$. Due to non-forgeability of replica certificates, the only certificates constructed for $v'$ are of the form $\Cert{v'}{\Replica_1}$, $\Replica_1 \in \Faulty{\Cluster_1}$. Consequently, each replica in $\Cluster_2$ can only receive and broadcast up to $\f{\Cluster_1}$ distinct messages of the form $(v', \Cert{v'}{\Replica_1'})$, $\Replica_1' \in \Cluster_1$. We conclude that no non-faulty replica will meet the conditions for $v'$ at Line~\ref{fig:rb_brs:local_receipt}.
\end{proof}

\section{Lower bounds for the cluster-sending problem}\label{sec:lower}

In the previous sections, we formalized the cluster-sending problem and considered broadcasting-based protocols to solve this problem. Unfortunately, these broadcasting-based protocols have high communication costs that, in the worst case, are quadratic in the size of the clusters involved. To determine whether we can do better than broadcasting, we will study the \emph{lower bound} on communication costs for any protocol solving the cluster-sending problem.

First, we consider systems with only crash failures, in which case we can lower bound the number of messages exchanged. This lower bound is entirely determined by the maximum number of messages that can get \emph{lost} due to faulty replicas not sending messages or ignoring received messages. In situations in which some replicas need to send or receive \emph{multiple} messages, the capabilities of faulty replicas to ignore messages is likewise multiplied. E.g., when the number of senders outnumbers the receivers, then some receivers must receive multiple messages. As these receivers could be faulty, this means they could cause loss of multiple messages. By a thorough analysis, we end up with the following lower bounds:

\begin{theorem}\label{thm:crash_opti}
Let $\System$ be a system with crash failures, let $\Cluster_1, \Cluster_2 \in \System$, and let $\{i, j\} = \{1,2\}$ such that $\n{\Cluster_i} \geq \n{\Cluster_j}$. Let $q_i = (\f{\Cluster_i} + 1) \div \nf{\Cluster_j}$, let $r_i = (\f{\Cluster_i} + 1) \bmod \nf{\Cluster_j}$, and let $\sigma_i = q_i\n{\Cluster_j} + r_i + \f{\Cluster_j}\sgn{r_i}$. Any protocol that solves the cluster-sending problem in which $\Cluster_1$ sends a value $v$ to $\Cluster_2$ needs to exchange at least $\sigma_i$ messages.
\end{theorem}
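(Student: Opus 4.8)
The plan is to reduce the message lower bound to a purely combinatorial statement about the bipartite ``who-sends-to-whom'' graph between $\Cluster_1$ and $\Cluster_2$, and then to prove that statement by a short degree-counting argument. Throughout I view the messages that replicas of $\Cluster_1$ send to replicas of $\Cluster_2$ (the only messages that can carry information about $v$ into $\Cluster_2$) as edges of a bipartite graph $G$ on $\Cluster_1 \union \Cluster_2$; multiple messages on the same ordered pair only increase the count, so it suffices to lower-bound the number of edges of $G$.

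First I would establish the necessary combinatorial condition, which I call \emph{resistance}: for the protocol to be correct, no choice of $\f{\Cluster_1}$ replicas in $\Cluster_1$ together with $\f{\Cluster_2}$ replicas in $\Cluster_2$ may be incident to \emph{all} edges of $G$. Indeed, suppose such a covering set $(S,T)$ existed, with $S \subseteq \Cluster_1$, $\abs{S} \le \f{\Cluster_1}$, and $T \subseteq \Cluster_2$, $\abs{T} \le \f{\Cluster_2}$. Every edge into $\Cluster_2 \difference T$ must then originate in $S$. Using the crash model under asynchronous delivery, I would crash the replicas in $T$ at the outset and crash every replica in $S$ before it sends any message to $\Cluster_2 \difference T$, so that no replica in $\Cluster_2 \difference T$ (all of which are non-faulty) ever receives a message from $\Cluster_1$. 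By indistinguishability from an execution in which $\Cluster_1$ agrees on a different value $v' \neq v$, the replicas in $\Cluster_2 \difference T$ cannot receive the correct value, contradicting \emph{receipt} in Definition~\ref{def:csp}. I expect this reduction to be the main obstacle: one must argue carefully, exploiting asynchrony (crashed replicas are indistinguishable from arbitrarily slow ones, and the covered messages can be delayed indefinitely), that the adversarial execution is admissible, that the non-crashed senders contribute no fresh message to the surviving receivers, and that those receivers' views are genuinely independent of $v$.

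Next I would translate resistance into a neighborhood condition on the \emph{smaller} cluster. By complementation, resistance is equivalent to requiring that every set $U$ of $\nf{\Cluster_j}$ replicas of the smaller cluster $\Cluster_j$ has at least $\f{\Cluster_i} + 1$ neighbors in $\Cluster_i$; this holds whether $\Cluster_j$ is the sending or the receiving cluster, because the covering condition on $(S,T)$ is symmetric in the two clusters. With this in hand the counting is short. Let $d_1 \ge d_2 \ge \dots \ge d_{\n{\Cluster_j}}$ be the degrees, in $G$, of the replicas of $\Cluster_j$, sorted non-increasingly, so $\abs{E(G)} = \sum_k d_k$. Applying resistance to the set $U^\star$ of the $\nf{\Cluster_j}$ smallest-degree replicas and bounding the neighborhood by the degree sum gives $\sum_{k > \f{\Cluster_j}} d_k \ge \f{\Cluster_i} + 1$. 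Since these $\nf{\Cluster_j}$ smallest degrees sum to at least $\f{\Cluster_i} + 1 = q_i \nf{\Cluster_j} + r_i$, their largest, $d_{\f{\Cluster_j} + 1}$, is at least $\lceil (\f{\Cluster_i}+1)/\nf{\Cluster_j} \rceil = q_i + \sgn{r_i}$.

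Finally I would assemble the bound. Each of the top $\f{\Cluster_j}$ degrees dominates $d_{\f{\Cluster_j}+1}$ and so is at least $q_i + \sgn{r_i}$, while the remaining $\nf{\Cluster_j}$ degrees sum to at least $\f{\Cluster_i} + 1$. Therefore
\[
  \abs{E(G)} \ge \f{\Cluster_j}\,(q_i + \sgn{r_i}) + (\f{\Cluster_i} + 1)
             = q_i\,\n{\Cluster_j} + r_i + \f{\Cluster_j}\sgn{r_i} = \sigma_i,
\]
using $\f{\Cluster_i} + 1 = q_i \nf{\Cluster_j} + r_i$ and $\n{\Cluster_j} = \nf{\Cluster_j} + \f{\Cluster_j}$. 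Since the forward messages alone already number at least $\sigma_i$, and these are among the messages exchanged between $\Cluster_1$ and $\Cluster_2$, the theorem follows. The combinatorial half is routine; the delicate part remains the indistinguishability reduction that forces resistance.
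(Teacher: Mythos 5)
Your proposal is correct, and at bottom it uses the same two ingredients as the paper---an adversarial crash construction plus extremal counting over the degree sequence of the smaller cluster---but it packages them in a genuinely different, contrapositive form. The paper argues by contradiction: assuming a run with $\sigma_1 - 1$ messages, it takes the set $R$ of top $\f{\Cluster_2}$ receivers, shows by a two-case analysis on $r_1$ that the messages $N$ not received by $R$ number at most $\f{\Cluster_1}$, and then crashes $R$ together with the (at most $\f{\Cluster_1}$) senders of $N$---that is, it directly exhibits the vertex cover whose nonexistence you axiomatize as \emph{resistance}. You instead isolate resistance as a necessary condition, flip it into the neighborhood-expansion form on the smaller cluster, and obtain $\sigma_i$ by sorting degrees and a single averaging step ($d_{\f{\Cluster_j}+1} \geq q_i + \sgn{r_i}$), which replaces the paper's case split on $r_i$; your equivalence between the cover condition and the expansion condition is sound, and the final arithmetic ($\f{\Cluster_j}(q_i + \sgn{r_i}) + q_i\nf{\Cluster_j} + r_i = \sigma_i$) checks out. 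Two things your modularization buys: it treats the orderings $\n{\Cluster_1} \geq \n{\Cluster_2}$ and $\n{\Cluster_2} \geq \n{\Cluster_1}$ uniformly via the symmetry of the cover condition, whereas the paper writes out only the case $i = 1$, $j = 2$ and leaves the mirror case implicit; and the lemma ``correctness implies no $(\f{\Cluster_1}, \f{\Cluster_2})$ cover'' is reusable---it is essentially what the paper re-derives in adapted form for certificates in Theorem~\ref{thm:cert_opti}. On the step you flag as delicate: you are right that the admissibility of the covered execution (crashed replicas indistinguishable from slow ones, and the surviving receivers' views independent of $v$ even though the protocol could adapt to observed crashes) is the only genuinely subtle point, but the paper's own proof handles it at exactly the same level of informality (``all replicas in $R$ could have crashed \dots in the worst case, no message in $M$ is successfully sent''), so relative to the paper's standard of rigor your proposal has no gap.
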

\begin{proof}
We assume $i = 1$, $j = 2$, and $\n{\Cluster_1} \geq \n{\Cluster_2}$. The proof is by contradiction. Hence, assume that a protocol \Name{P} can solve the cluster-sending problem using at most $\sigma_1 - 1$ messages. Consider a run of \Name{P} that sends messages $M$. Without loss of generality, we can assume that $\abs{M} = \sigma_1-1$. Let $R$ be the top $\f{\Cluster_2}$ receivers of messages in $M$, let $S = \Cluster_2 \difference R$, let $M_R \subset M$ be the messages received by replicas in $R$, and let $N = M \difference M_R$. We notice that $\n{R} = \f{\Cluster_2}$ and that $\n{S} = \nf{\Cluster_2}$. 

First, we prove that $\abs{M_R} \geq q_1\f{\Cluster_2} + \f{\Cluster_2}\sgn{r_1}$, this by contradiction. Assume $\abs{M_R} = q_1\f{\Cluster_2} + \f{\Cluster_2}\sgn{r_1} - v$, $v \geq 1$. Hence, we must have $\abs{N} = q_1\nf{\Cluster_2} + r_1 + v - 1$. Based on the value $r_1$, we distinguish two cases. The first case is $r_1 = 0$. In this case, $\abs{M_R} = q_1\f{\Cluster_2} - v < q_1\f{\Cluster_2}$ and $\abs{N} =  q_1\nf{\Cluster_2} + v - 1 \geq q_1\nf{\Cluster_2}$. As $q_1\f{\Cluster_2} > \abs{M_R}$, there must be a replica in $R$ that received at most $q_1-1$ messages. As $\abs{N} \geq q_1\nf{\Cluster_2}$, there must be a replica in $S$ that received at least $q_1$ messages. The other case is $r_1 > 0$. In this case, $\abs{M_R} = q_1\f{\Cluster_2} + \f{\Cluster_2} - v < (q_1+1)\f{\Cluster_2}$ and $\abs{N} =  q_1\nf{\Cluster_2} + r_1 + v - 1 > q_1\nf{\Cluster_2}$. As $(q_1+1)\f{\Cluster_2} > \abs{M_R}$, there must be a replica in $R$ that received at most $q_1$ messages. As $\abs{N} > q_1\nf{\Cluster_2}$, there must be a replica in $S$ that received at least $q_1+1$ messages. In both cases, we identified a replica in $S$ that received more messages than a replica in $R$, a contradiction. Hence, we must conclude that $\abs{M_R} \geq q_1\f{\Cluster_2} + \f{\Cluster_2}\sgn{r_1}$ and, consequently, $\abs{N} \leq q_1\nf{\Cluster_2} + r_1 - 1 \leq \f{\Cluster_1}$. As $\n{R} = \f{\Cluster_2}$, all replicas in $R$ could have crashed, in which case only the messages in $N$ are actually received. As $\abs{N} \leq \f{\Cluster_1}$, all messages in $N$ could be sent by replicas that have crashed. Hence, in the worst case, no message in $M$ is successfully sent by a non-faulty replica in $\Cluster_1$ and received by a non-faulty replica in $\Cluster_2$, implying that \Name{P} fails.
\end{proof}

Notice that the above lower bounds guarantee the delivery of at least one message. Next, we look at systems with Byzantine failures and replica signing. In this environment, we prove a lower bound on the number of certificates exchanged. In this case, the receiving cluster $\Cluster_2$ must eventually receive $\f{\Cluster_1} + 1$ distinct certificates signed by distinct replicas in $\Cluster_1$. A thorough analysis reveals the following lower bounds:
\begin{theorem}\label{thm:cert_opti}
Let $\System$ be a system with Byzantine failures and replica signing and let $\Cluster_1, \Cluster_2 \in \System$. Consider the cluster-sending problem in which $\Cluster_1$ sends a value $v$ to $\Cluster_2$.
\begin{enumerate}[nosep]
    \item Let $q_1 = (2\f{\Cluster_1} + 1) \div \nf{\Cluster_2}$, $r_1 = (2\f{\Cluster_1} + 1) \bmod \nf{\Cluster_2}$, and $\tau_1  = q_1\n{\Cluster_2} + r_1 + \f{\Cluster_2}\sgn{r_1}$. If $\n{\Cluster_1} \geq \n{\Cluster_2}$, then any protocol that solves the cluster-sending problem needs to exchange at least $\tau_1$ certificates of the form $\Cert{v}{\Replica}$, $\Replica \in \Cluster_1$.
    \item Let $q_2 = (\f{\Cluster_2} + 1) \div  (\nf{\Cluster_1} - \f{\Cluster_1})$, $r_2 = (\f{\Cluster_2} + 1) \bmod  (\nf{\Cluster_1} - \f{\Cluster_1})$, and $\tau_2  = q_2\n{\Cluster_1} + r_2 + 2\f{\Cluster_1}\sgn{r_2}$. If $\n{\Cluster_2} \geq \n{\Cluster_1}$, then any protocol that solves the cluster-sending problem needs to exchange at least $\tau_2$ certificates of the form $\Cert{v}{\Replica}$, $\Replica \in \Cluster_1$.
\end{enumerate}
\end{theorem}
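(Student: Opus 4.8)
The plan is to reduce both bounds to a single combinatorial delivery requirement and then solve two dual counting problems, the first of which reuses the machinery of Theorem~\ref{thm:crash_opti}. First I would argue that any correct protocol must guarantee that, for every choice of $\f{\Cluster_2}$ faulty receivers (which, being Byzantine, may silently ignore every certificate they receive) and every choice of $\f{\Cluster_1}$ faulty senders (which may simply decline to sign $v$), the non-faulty replicas of $\Cluster_2$ still collectively obtain $\f{\Cluster_1}+1$ certificates $\Cert{v}{\Replica}$ signed by distinct non-faulty $\Replica\in\Cluster_1$. The threshold $\f{\Cluster_1}+1$ is forced by \emph{agreement}: with only $\f{\Cluster_1}$ distinct certificates, an alternative run in which $\Cluster_1$ never agreed on $v$ but its $\f{\Cluster_1}$ faulty replicas signed $v$ is indistinguishable to $\Cluster_2$, so acceptance would be unsafe. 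Since the adversary may additionally silence up to $\f{\Cluster_1}$ signers, this yields the key reduction: any valid protocol must, for every set $F_2$ of $\f{\Cluster_2}$ receivers, route certificates from at least $2\f{\Cluster_1}+1$ distinct signers to receivers outside $F_2$. Modelling a run as a bipartite assignment in which each signer $\Replica$ is tied to the set $D(\Replica)\subseteq\Cluster_2$ of receivers to which its certificate is transmitted, with $\abs{M}=\sum_{\Replica}\Size{D(\Replica)}$ the quantity to bound, the requirement reads: for every $F_2$ with $\n{F_2}=\f{\Cluster_2}$, at least $2\f{\Cluster_1}+1$ signers satisfy $D(\Replica)\not\subseteq F_2$.

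For part~(1), where $\n{\Cluster_1}\geq\n{\Cluster_2}$, I would count per receiver. The number of distinct signers reaching a set of non-faulty receivers never exceeds the number of certificates delivered there, so the requirement implies that at least $2\f{\Cluster_1}+1$ certificates must reach the non-faulty receivers for every $F_2$. This is exactly the delivery condition analysed in Theorem~\ref{thm:crash_opti}, but with the target $\f{\Cluster_1}+1$ replaced by $2\f{\Cluster_1}+1$. Replaying that argument verbatim (let $R$ be the $\f{\Cluster_2}$ heaviest receivers and bound the residual $N$) shows that $\abs{M}\leq\tau_1-1$ forces at most $2\f{\Cluster_1}$ certificates, hence at most $2\f{\Cluster_1}$ distinct signers, to reach the non-faulty receivers; silencing $\f{\Cluster_1}$ of these signers leaves at most $\f{\Cluster_1}$, contradicting the reduction and yielding $\tau_1=q_1\n{\Cluster_2}+r_1+\f{\Cluster_2}\sgn{r_1}$.

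For part~(2), where $\n{\Cluster_2}\geq\n{\Cluster_1}$, signers are the scarce resource, so I would count per signer. Writing $k\leq\n{\Cluster_1}$ for the number of signers with nonempty $D(\Replica)$ and $d_1\leq\dots\leq d_k$ for their degrees $\Size{D(\Replica)}$, the adversary isolates a set of signers precisely by placing the union of their neighborhoods inside $F_2$; since that union is at most the sum of the degrees, validity forces the $k-2\f{\Cluster_1}$ smallest degrees to sum to at least $\f{\Cluster_2}+1$ (otherwise the adversary isolates those signers with $\f{\Cluster_2}$ faulty receivers, leaving only $2\f{\Cluster_1}$ survivors, and then at most $\f{\Cluster_1}$ non-faulty ones). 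Minimising $\sum_i d_i$ under this constraint, the remaining $2\f{\Cluster_1}$ degrees are each at least $\lceil(\f{\Cluster_2}+1)/(k-2\f{\Cluster_1})\rceil$, and the minimum is attained at $k=\n{\Cluster_1}$, so that $k-2\f{\Cluster_1}=\nf{\Cluster_1}-\f{\Cluster_1}$; distributing $\f{\Cluster_2}+1$ over these $\nf{\Cluster_1}-\f{\Cluster_1}$ slots reproduces exactly the decomposition $\tau_2=q_2\n{\Cluster_1}+r_2+2\f{\Cluster_1}\sgn{r_2}$.

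I expect the main obstacle to lie in part~(2). Unlike the crash bound it is genuinely a ``transpose'' of the receiver-side counting, and one must argue carefully that overlapping neighborhoods only help the adversary (so the constraint on the smallest degrees is necessary, not merely sufficient) and that $k=\n{\Cluster_1}$ is the optimising choice, so that the per-signer degree bound is tight. A secondary point requiring conceptual care is the reduction itself: because the protocol may be adaptive and the system is asynchronous, justifying the jump from $\f{\Cluster_1}+1$ to $2\f{\Cluster_1}+1$ rests on the fact that declining to sign is a legal faulty behavior and that asynchrony lets the adversary delay or suppress exactly the certificates it needs without being detected by the non-faulty replicas.
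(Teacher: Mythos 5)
Your proposal is correct and follows essentially the same route as the paper: part~(1) replays the counting of Theorem~\ref{thm:crash_opti} with the threshold raised to $2\f{\Cluster_1}+1$ and then invokes the faulty signers mimicking certificates for a value $v'$, exactly as the paper does, and your upfront reduction (for every $\f{\Cluster_2}$-set $F_2$ of receivers, certificates from at least $2\f{\Cluster_1}+1$ distinct signers must land outside $F_2$) is precisely the paper's split of the surviving signers into two sets of $\f{\Cluster_1}$, either of which could be the Byzantine one. Your part~(2) degree-sequence minimization (smallest $k-2\f{\Cluster_1}$ degrees must sum to at least $\f{\Cluster_2}+1$, top degrees each at least $\lceil(\f{\Cluster_2}+1)/(k-2\f{\Cluster_1})\rceil$, minimum at $k=\n{\Cluster_1}$) is the contrapositive of the paper's count via the top $2\f{\Cluster_1}$ signers with residual $\abs{D}\leq\f{\Cluster_2}$, so it is the same argument presented in dual form.
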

\begin{proof}
For simplicity, we assume that each certificate is send to $\Cluster_2$ in an individual message independent of the other certificates. Hence, each certificate has a sender and a signer (both replicas in $\Cluster_1$) and a receiver (a replica in $\Cluster_2$). 

First, we prove the case for $\n{\Cluster_1} \geq \n{\Cluster_2}$ using contradiction. Assume that a protocol \Name{P} can solve the cluster-sending problem using at most $\tau_1 - 1$ certificates. Consider a run of \Name{P} that sends messages $C$, each message representing a single certificate, with $\abs{C} = \tau_1 - 1$. Following the proof of Theorem~\ref{thm:crash_opti}, one can show that, in the worst case, at most $\f{\Cluster_1}$ messages are sent by non-faulty replicas in $\Cluster_1$ and received by non-faulty replicas in $\Cluster_2$. Now consider the situation in which the faulty replicas in $\Cluster_1$ mimic the behavior in $C$ by sending certificates for another value $v'$ to the same receivers. For the replicas in $\Cluster_2$, the two runs behave the same, as in both cases at most $\f{\Cluster_1}$ certificates for a value, possibly signed by distinct replicas, are received. Hence, either both runs successfully send values, in which case $v'$ is received by $\Cluster_2$ without agreement, or both runs fail to send values. In both cases, \Name{P} fails to solve the cluster-sending problem.

Next, we prove the case for $\n{\Cluster_2} \geq \n{\Cluster_1}$ using contradiction. Assume that a protocol \Name{P} can solve the cluster-sending problem using at most $\tau_2 - 1$ certificates. Consider a run of \Name{P} that sends messages $C$, each message representing a single certificate, with $\abs{C} = \tau_2 - 1$. Let $R$ be the top $2\f{\Cluster_1}$ signers of certificates in $C$, let $C_R \subset C$ be the certificates signed by replicas in $R$, and let $D = C \difference C_R$. Via a contradiction argument similar to the one used in the proof of Theorem~\ref{thm:crash_opti}, one can show that $\abs{C_R} \geq 2q_2\f{\Cluster_1} + 2\f{\Cluster_1}\sgn{r}$ and $\abs{D} \leq q_2(\nf{\Cluster_1} - \f{\Cluster_1}) + r - 1 = \f{\Cluster_2}$. As $\abs{D} \leq \f{\Cluster_2}$, all replicas receiving these certificates could have crashed. Hence, the only certificates that are received by $\Cluster_2$ are in $C_R$. Partition $C_R$ into two sets of certificates $C_{R,1}$ and $C_{R,2}$ such that both sets contain certificates signed by at most $\f{\Cluster_1}$ distinct replicas. As the certificates in $C_{R,1}$ and $C_{R,2}$ are signed by $\f{\Cluster_1}$ distinct replicas, one of these sets can contain only certificates signed by Byzantine replicas. Hence, either $C_{R,1}$ or $C_{R,2}$ could certify a non-agreed upon value $v'$, while only the other set certifies $v$. Consequently, the replicas in $\Cluster_2$ cannot distinguish between receiving an agreed-upon value $v$ or a non-agreed-upon-value $v'$. We conclude that \Name{P} fails to solve the cluster-sending problem.
\end{proof}

\section{Cluster-sending via bijective sending}\label{sec:bijective}

In the previous section, we explored lower bounds for the cluster-sending problem. Close inspection shows that these lower bounds are only linear in the size of the clusters involved, which is much better than the quadratic bounds of straightforward broadcasting-based protocols. Hence, there is much room for improvement. Next, we develop \emph{bijective sending}, a powerful technique that allows the design of highly efficient cluster-sending protocols. In bijective sending, cluster $\Cluster_1$ uses the consensus protocol to reach agreement on a value $v$ and certificate $\Cert{v}{\Cluster_1}$. Then, the protocol chooses sets $S_1 \subseteq \Cluster_1$ and $S_2 \subseteq \Cluster_2$ of equal size and instruct each replica in $S_1 \subseteq \Cluster_1$ to send $(v, \Cert{v}{\Cluster_1})$ to a distinct replica in $\Cluster_2$. By choosing $S_1$ sufficiently large, we can guarantee successful cluster-sending. First, we present a bijective-sending protocol for systems with Byzantine failures and cluster signing.  The pseudo-code for this protocol, named \Name{BS-bcs}, can be found in Figure~\ref{fig:bs_bcs}. Next, we illustrate bijective sending, the underlying technique utilized by \Name{BS-bcs}:

\begin{figure}[t!]
    \begin{minipage}{0.7\textwidth}
        \begin{myprotocol}
            \TITLE{Protocol for the sending cluster $\Cluster_1$}
            \STATE Agree on $v$ and distribute $(v, \Cert{v}{\Cluster_1})$ to all non-faulty replicas in $\Cluster_1$.
            \STATE Choose replicas $S_1 \subseteq \Cluster_1$ with $\n{S_1} = \f{\Cluster_1} + \f{\Cluster_2} + 1$.
            \STATE Choose replicas $S_2 \subseteq \Cluster_2$ with $\n{S_2} = \f{\Cluster_1} + \f{\Cluster_2} + 1$.
            \STATE Choose a bijection $b : S_1 \rightarrow S_2$.
            \FOR{$\Replica_1 \in S_1$}
                \STATE $\Replica_1$ sends $(v, \Cert{v}{\Cluster_1})$ to $b(\Replica_1)$.\label{fig:bs_bcs:send}
            \ENDFOR
            \SPACE
            \TITLE{Protocol for the receiving cluster $\Cluster_2$}
            \STATE See the protocol for the receiving cluster in \Name{RB-bcs}.
        \end{myprotocol}
    \end{minipage}
    \hfill
    \begin{minipage}{0.285\textwidth}
        \caption{\Name{BS-bcs}, the bijective sending cluster-sending protocol that sends a value $v$ from $\Cluster_1$ to $\Cluster_2$. We assume Byzantine failures and a system that provides cluster signing.}\label{fig:bs_bcs}
    \end{minipage}
\end{figure}

\begin{example}\label{ex:bsent}
Let $\System$ be a system, let $\Cluster_1 = \{\Replica_1, \dots, \Replica_{8} \} \in \System$ with $\Faulty{\Cluster_1} = \{ \Replica_1, \Replica_3, \Replica_4 \}$, and let $\Cluster_2 = \{ \Replica_9, \dots, \Replica_{15} \} \in \System$ with $\Faulty{\Cluster_2} = \{ \Replica_9, \Replica_{11} \}$. We have $\f{\Cluster_1} + \f{\Cluster_2} + 1 = 6$. We choose
\[
    S_1 = \{ \Replica_2, \dots, \Replica_7 \};\qquad
    S_2 = \{ \Replica_9, \dots, \Replica_{15} \};\qquad
    b   = \{ \Replica_i \rightarrow \Replica_{i + 7} \mid 2 \leq i \leq 7 \}.
\]
In Figure~\ref{fig:ex_bs}, we sketched this situation. Replica $\Replica_2$ sends a valid message to $\Replica_9$. As $\Replica_9$ is faulty, it might ignore this message. Replicas $\Replica_3$ and $\Replica_4$ are faulty and might not send a valid message. Additionally, $\Replica_{11}$ is faulty and might ignore any message it receives. The messages sent from $\Replica_5$ to $\Replica_{12}$, from $\Replica_6$ to $\Replica_{13}$, and from $\Replica_7$ to $\Replica_{14}$ are all sent by non-faulty replicas to non-faulty replicas. Hence, these messages all arrive correctly.
\begin{figure}[t!]
    \begin{minipage}{0.5\textwidth}
        \centering
        \begin{tikzpicture}[scale=0.75,every node/.style={transform shape}]
            \node[left,node_text] at (-0.8, 0) {$\Cluster_1$:};
            \node[left,node_text] at (-0.8, -2) {$\Cluster_2$:};

            \draw[rounded corners,thick,fill=black!10] (-0.7, -0.5) rectangle (7.7, 0.7);
            \draw[rounded corners,thick,fill=black!10] ( 0.3, -2.7) rectangle (7.7, -1.5);
            \draw[rounded corners,draw=red!60,fill=red!60] (-0.6,  -0.4) rectangle (0.6, 0.6);
            \draw[rounded corners,draw=red!60,fill=red!60] ( 1.4,  -0.4) rectangle (3.6, 0.6);
            \draw[rounded corners,draw=red!60,fill=red!60] ( 0.4,  -2.6) rectangle (1.6, -1.6);
            \draw[rounded corners,draw=red!60,fill=red!60] ( 2.4,  -2.6) rectangle (3.6, -1.6);

            \node[dot] (r1)  at (0, 0) {};
            \node[dot] (r2)  at (1, 0) {};
            \node[dot] (r3)  at (2, 0) {};
            \node[dot] (r4)  at (3, 0) {};
            \node[dot] (r5)  at (4, 0) {};
            \node[dot] (r6)  at (5, 0) {};
            \node[dot] (r7)  at (6, 0) {};
            \node[dot] (r8)  at (7, 0) {};

            \node[dot] (r9)  at (1, -2) {};
            \node[dot] (r10) at (2, -2) {};
            \node[dot] (r11) at (3, -2) {};
            \node[dot] (r12) at (4, -2) {};
            \node[dot] (r13) at (5, -2) {};
            \node[dot] (r14) at (6, -2) {};
            \node[dot] (r15) at (7, -2) {};

            \node[above=-3pt,label] at (r1)  {$\Replica_{1}$};
            \node[above=-3pt,label] at (r2)  {$\Replica_{2}$};
            \node[above=-3pt,label] at (r3)  {$\Replica_{3}$};
            \node[above=-3pt,label] at (r4)  {$\Replica_{4}$};
            \node[above=-3pt,label] at (r5)  {$\Replica_{5}$};
            \node[above=-3pt,label] at (r6)  {$\Replica_{6}$};
            \node[above=-3pt,label] at (r7)  {$\Replica_{7}$};
            \node[above=-3pt,label] at (r8)  {$\Replica_{8}$};
            \node[below=-3pt,label] at (r9)  {$\Replica_{9}$};
            \node[below=-3pt,label] at (r10) {$\Replica_{10}$};
            \node[below=-3pt,label] at (r11) {$\Replica_{11}$};
            \node[below=-3pt,label] at (r12) {$\Replica_{12}$};
            \node[below=-3pt,label] at (r13) {$\Replica_{13}$};
            \node[below=-3pt,label] at (r14) {$\Replica_{14}$};
            \node[below=-3pt,label] at (r15) {$\Replica_{15}$};
            
            \path[thick] (r2) edge[dashed,->] (r9);
            \path[thick] (r3) edge[dashed,->] (r10);
            \path[thick] (r4) edge[dashed,->] (r11);
            \path[thick] (r5) edge[->] (r12);
            \path[thick] (r6) edge[->] (r13);
            \path[thick] (r7) edge[->] (r14);
        \end{tikzpicture}
    \end{minipage}
    \hfill
    \begin{minipage}{0.475\textwidth}
        \caption{Bijection sending from $\Cluster_1$ to $\Cluster_2$. The faulty replicas are highlighted using a red background. The edges connect replicas $\Replica \in \Cluster_1$ with $b(\Replica) \in \Cluster_2$. Each solid edge indicates a message sent and received by non-faulty replicas. Each dashed edge indicates a message sent or received by a faulty replica.}\label{fig:ex_bs}
    \end{minipage}
\end{figure}
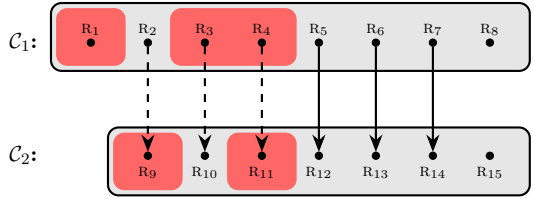
\end{example}

Having illustrated the concept of bijective sending, as employed by \Name{BS-bcs}, we are now ready to prove correctness of \Name{BS-bcs}:

\begin{proposition}\label{prop:bs_bcs}
Let $\System$ be a system with Byzantine failures and cluster signing and let $\Cluster_1, \Cluster_2 \in \System$. If $\n{\Cluster_1} > 2\f{\Cluster_1}$, $\n{\Cluster_1} > \f{\Cluster_1} + \f{\Cluster_2}$, and $\n{\Cluster_2} > \f{\Cluster_1} + \f{\Cluster_2}$, then \Name{BS-bcs} satisfies Definition~\ref{def:csp}. The protocol sends $\f{\Cluster_1} + \f{\Cluster_2} + 1$ messages, of size $\BigO{\Size{v}}$ each, between $\Cluster_1$ and $\Cluster_2$.
\end{proposition}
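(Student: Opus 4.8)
The plan is to follow the structure of the proof of Proposition~\ref{prop:rb_bcs}, with the crucial new ingredient being a counting argument on the bijection $b$ rather than a broadcast. First I would verify that the sets required by the protocol actually exist: since $\n{\Cluster_1} > \f{\Cluster_1} + \f{\Cluster_2}$ and $\n{\Cluster_2} > \f{\Cluster_1} + \f{\Cluster_2}$, both clusters contain at least $\f{\Cluster_1} + \f{\Cluster_2} + 1$ replicas, so $S_1$ and $S_2$ of the prescribed size can be chosen together with a bijection $b : S_1 \rightarrow S_2$. The remaining hypothesis $\n{\Cluster_1} > 2\f{\Cluster_1}$ is exactly what guarantees that $\Cluster_1$ can reach consensus on $v$ and produce the certificate $\Cert{v}{\Cluster_1}$ in the first line of the protocol.

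The heart of the argument is to show that at least one message travels from a non-faulty replica in $S_1$ to a non-faulty replica in $S_2$. I would call a replica $\Replica_1 \in S_1$ \emph{bad} if $\Replica_1 \in \Faulty{\Cluster_1}$ or $b(\Replica_1) \in \Faulty{\Cluster_2}$. Since $\abs{\Faulty{\Cluster_1} \intersect S_1} \leq \f{\Cluster_1}$, at most $\f{\Cluster_1}$ elements of $S_1$ are bad for the first reason; and because $b$ is a bijection, the number of $\Replica_1 \in S_1$ with $b(\Replica_1) \in \Faulty{\Cluster_2}$ equals $\abs{\Faulty{\Cluster_2} \intersect S_2} \leq \f{\Cluster_2}$. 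Hence at most $\f{\Cluster_1} + \f{\Cluster_2}$ elements of $S_1$ are bad, and since $\n{S_1} = \f{\Cluster_1} + \f{\Cluster_2} + 1$ there must exist a \emph{good} replica $\Replica_1 \in \NonFaulty{S_1}$ with $b(\Replica_1) \in \NonFaulty{S_2}$.

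From this good pair the rest follows exactly as in Proposition~\ref{prop:rb_bcs}: by Line~\ref{fig:bs_bcs:send} the non-faulty $\Replica_1$ sends $(v, \Cert{v}{\Cluster_1})$, which under reliable communication arrives at the non-faulty $b(\Replica_1)$, triggering a local broadcast in $\Cluster_2$ so that every replica in $\NonFaulty{\Cluster_2}$ eventually considers $v$ \emph{received}, establishing \emph{receipt} and \emph{confirmation}. \emph{Agreement} again follows from the non-forgeability of $\Cert{v}{\Cluster_1}$, which no set of faulty replicas can produce for a value not agreed upon by $\Cluster_1$. The complexity claim is immediate, as each of the $\f{\Cluster_1} + \f{\Cluster_2} + 1$ replicas in $S_1$ sends a single message of size $\BigO{\Size{v}}$.

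I expect the main obstacle to be conceptual rather than technical: recognizing that pairing senders with receivers one-to-one confines the disruption caused by faulty replicas on \emph{both} sides to an additive $\f{\Cluster_1} + \f{\Cluster_2}$, rather than the multiplicative loss incurred by broadcasting, so that a single surplus replica already guarantees one non-faulty-to-non-faulty channel. Once the bad-set bound is phrased in terms of the bijection, the counting itself is routine.
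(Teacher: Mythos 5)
Your proof is correct and takes essentially the same approach as the paper: where you apply a union bound on ``bad'' senders (faulty themselves or mapped by $b$ to a faulty receiver), the paper equivalently counts via the image set $T = \{ b(\Replica) \mid \Replica \in \NonFaulty{S_1} \}$, noting $\n{T} = \nf{S_1} \geq \f{\Cluster_2} + 1$ and hence $\nf{T} \geq 1$ --- the same additive confinement of faults through the bijection. Your explicit checks that $S_1$, $S_2$ exist and that $\n{\Cluster_1} > 2\f{\Cluster_1}$ enables the agreement step are left implicit in the paper but are welcome additions.
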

\begin{proof}
 Choose $S_1 \subseteq \Cluster_1$ and $S_2 \subseteq \Cluster_2$ in accordance with \Name{BS-bcs}. We have $\n{S_1} = \n{S_2} = \f{\Cluster_1} + \f{\Cluster_2} + 1$. Let $T = \{ b(\Replica) \mid \Replica \in \NonFaulty{S_1} \}$. By construction, we have $\nf{S_1} = \n{T} \geq \f{\Cluster_2} + 1$. Hence, we have $\nf{T} \geq 1$. Due to Line~\ref{fig:bs_bcs:send}, each replica in $\NonFaulty{T}$ will receive the message $(v, \Cert{v}{\Cluster_1})$ from a distinct replica in $\NonFaulty{S_1}$ and broadcast $(v, \Cert{v}{\Cluster_1})$ to all replicas in $\Cluster_2$. As $\nf{T} \geq 1$, each replica $\Replica_2' \in  \NonFaulty{\Cluster_2}$ will receive $(v, \Cert{v}{\Cluster_1})$ from a replica in $\Cluster_2$. Hence, analogous to the proof of Proposition~\ref{prop:rb_brs}, we can prove \emph{receipt}, \emph{confirmation}, and \emph{agreement}.
\end{proof}

As replica signing can emulate cluster signing, \Name{BS-bcs} can also be used for systems with only replica signing. Such an emulated solution does require large messages whose size depends on the size of the sending cluster, however. For systems with replica signing we can utilize bijective sending in another manner, however. We propose \Name{BS-brs}, for which the pseudo-code can be found in Figure~\ref{fig:bs_brs}. Next, we prove the correctness of \Name{BS-brs}:

\begin{figure}[t!]
    \begin{minipage}{0.7\textwidth}
        \begin{myprotocol}
        \TITLE{Protocol for the sending cluster $\Cluster_1$}
        \STATE Agree on $v$ and distribute $v$ to all non-faulty replicas in $\Cluster_1$.
        \STATE Choose replicas $S_1 \subseteq \Cluster_1$ with $\n{S_1} = 2\f{\Cluster_1} + \f{\Cluster_2} + 1$.
        \STATE Choose replicas $S_2 \subseteq \Cluster_2$ with $\n{S_2} = 2\f{\Cluster_1} + \f{\Cluster_2} + 1$.
        \STATE Choose a bijection $b : S_1 \rightarrow S_2$.
        \FOR{$\Replica_1 \in S_1$}
            \STATE $\Replica_1$ sends $(v, \Cert{v}{\Replica_1})$ to $b(\Replica_1)$.\label{fig:bs_brs:send}
        \ENDFOR
        \SPACE
        \TITLE{Protocol for the receiving cluster $\Cluster_2$}
        \STATE See the protocol for the receiving cluster in \Name{RB-brs}.
        \end{myprotocol}
    \end{minipage}
    \hfill
    \begin{minipage}{0.285\textwidth}
        \caption{\Name{BS-brs}, the bijective sending cluster-sending protocol that sends a value $v$ from $\Cluster_1$ to $\Cluster_2$. We assume Byzantine failures and a system that provides replica signing.}\label{fig:bs_brs}
    \end{minipage}
\end{figure}

\begin{proposition}\label{prop:bs_brs}
Let $\System$ be a system with Byzantine failures and replica signing and let $\Cluster_1, \Cluster_2 \in \System$. If  $\n{\Cluster_1} > 2\f{\Cluster_1} + \f{\Cluster_2}$ and $\n{\Cluster_2} > 2\f{\Cluster_1} + \f{\Cluster_2}$, then \Name{BS-brs} satisfies Definition~\ref{def:csp}. The protocol sends $2\f{\Cluster_1} + \f{\Cluster_2} + 1$ messages, of size $\BigO{\Size{v}}$ each, between $\Cluster_1$ and $\Cluster_2$.
\end{proposition}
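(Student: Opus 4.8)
The plan is to mirror the proof of Proposition~\ref{prop:bs_bcs}, but to track \emph{individual replica certificates} rather than a single cluster certificate, so that the receipt condition inherited from the \Name{RB-brs} receiving protocol---namely, collecting $\f{\Cluster_1}+1$ certificates signed by distinct replicas of $\Cluster_1$---is met. First I would check feasibility: the hypotheses $\n{\Cluster_1} > 2\f{\Cluster_1} + \f{\Cluster_2}$ and $\n{\Cluster_2} > 2\f{\Cluster_1} + \f{\Cluster_2}$ guarantee $\n{\Cluster_1} \geq 2\f{\Cluster_1} + \f{\Cluster_2} + 1$ and $\n{\Cluster_2} \geq 2\f{\Cluster_1} + \f{\Cluster_2} + 1$, so sets $S_1 \subseteq \Cluster_1$ and $S_2 \subseteq \Cluster_2$ of size $2\f{\Cluster_1} + \f{\Cluster_2} + 1$ and a bijection $b : S_1 \rightarrow S_2$ exist.

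For \emph{receipt} and \emph{confirmation}, the heart of the argument is a two-stage count that survives the bijection. Since $S_1$ contains at most $\f{\Cluster_1}$ faulty replicas, $\nf{S_1} \geq \f{\Cluster_1} + \f{\Cluster_2} + 1$. Setting $T = \{ b(\Replica) \mid \Replica \in \NonFaulty{S_1} \}$, injectivity of $b$ gives $\n{T} = \nf{S_1} \geq \f{\Cluster_1} + \f{\Cluster_2} + 1$; discarding the at most $\f{\Cluster_2}$ faulty replicas of $\Cluster_2$ that may lie in $T$ yields $\nf{T} \geq \f{\Cluster_1} + 1$. By Line~\ref{fig:bs_brs:send}, each replica in $\NonFaulty{T}$ receives $(v, \Cert{v}{\Replica_1})$ from a \emph{distinct} non-faulty signer $\Replica_1 \in \NonFaulty{S_1}$ (distinctness is preserved because $b$ is a bijection), and by the receiving protocol it rebroadcasts this message inside $\Cluster_2$. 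Hence every $\Replica_2' \in \NonFaulty{\Cluster_2}$ eventually collects $\f{\Cluster_1} + 1$ messages $(v, \Cert{v}{\Replica_1})$, each forwarded by a replica in $\Cluster_2$ and each carrying a certificate from a distinct replica of $\Cluster_1$, which is exactly the condition at Line~\ref{fig:rb_brs:local_receipt}; this establishes receipt and confirmation.

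For \emph{agreement}, I would reuse verbatim the argument from Proposition~\ref{prop:rb_brs}: a value $v'$ not agreed upon by $\Cluster_1$ can be signed only by the at most $\f{\Cluster_1}$ replicas in $\Faulty{\Cluster_1}$, so by non-forgeability at most $\f{\Cluster_1}$ distinct certificates $\Cert{v'}{\Replica_1'}$ ever exist; no non-faulty replica can therefore assemble the $\f{\Cluster_1}+1$ distinct-signer certificates required at Line~\ref{fig:rb_brs:local_receipt}, so $v'$ is never considered received.

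Finally, the complexity is immediate: exactly $\n{S_1} = 2\f{\Cluster_1} + \f{\Cluster_2} + 1$ messages cross between the clusters (one per edge of $b$), and each has size $\Size{(v, \Cert{v}{\Replica_1})} = \BigO{\Size{v}}$ by the assumed linear bound on certificate size. I expect the only delicate point to be the budgeting in the two-stage count: the threshold $2\f{\Cluster_1}$ on the $\Cluster_1$-side must simultaneously absorb the up-to-$\f{\Cluster_1}$ faulty \emph{senders} lost in $\Cluster_1$ and still leave $\f{\Cluster_1}+1$ distinct signers alive after the further loss of up-to-$\f{\Cluster_2}$ faulty \emph{receivers} in $\Cluster_2$, and one must confirm that passing through the bijection preserves both the count and the distinctness of signers rather than collapsing them.
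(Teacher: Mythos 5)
Your proof is correct and follows essentially the same route as the paper's: the same sets $S_1, S_2$ of size $2\f{\Cluster_1} + \f{\Cluster_2} + 1$, the same set $T = \{ b(\Replica) \mid \Replica \in \NonFaulty{S_1} \}$ with the two-stage count $\n{T} = \nf{S_1} \geq \f{\Cluster_1} + \f{\Cluster_2} + 1$ and hence $\nf{T} \geq \f{\Cluster_1} + 1$, and the same agreement argument borrowed from Proposition~\ref{prop:rb_brs}. The only difference is that you spell out in full the steps the paper compresses into ``analogous to the proof of Proposition~\ref{prop:rb_brs},'' which is fine.
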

\begin{proof}
Choose $S_1 \subseteq \Cluster_1$ and $S_2 \subseteq \Cluster_2$ in accordance with \Name{BS-brs}. We have $\n{S_1} = \n{S_2} =  2\f{\Cluster_1} + \f{\Cluster_2} + 1$. Let $T = \{ b(\Replica) \mid \Replica \in \NonFaulty{S_1} \}$. By construction, we have $\nf{S_1} = \n{T} \geq  \f{\Cluster_1} + \f{\Cluster_2} + 1$. Hence, we have $\nf{T} \geq \f{\Cluster_1} + 1$. Due to Line~\ref{fig:bs_brs:send}, each replica in $\NonFaulty{T}$ will receive the message $(v, \Cert{v}{\Replica_1})$ from a distinct replica $\Replica_1 \in \NonFaulty{S_1}$. Hence, analogous to the proof of Proposition~\ref{prop:rb_brs}, we can prove \emph{receipt}, \emph{confirmation}, and \emph{agreement}.
\end{proof}

For completeness, we consider the situation in which replica certificates have constant size. In this case, the presented version of \Name{BS-brs} performs too much communication.  We can correct this by only letting $\f{\Cluster_1} + \f{\Cluster_2} + 1$ replicas send the value $v$, while all $2\f{\Cluster_1} + \f{\Cluster_2} + 1$ replicas send a replica certificate.

\section{Cluster-sending via  partitioning}\label{sec:partitioned}

The bijective sending techniques introduced in the previous section have optimal communication complexity. Unfortunately, bijective sending is in practice limited to communication between similar-sized clusters, as it places unrealistic requirements on clusters that vastly differ in size. 
\begin{example}
Consider a system $\System$ with Byzantine failures and cluster certificates. The cluster $\Cluster_1 \in \System$ wants to send value $v$ to $\Cluster_2 \in \System$. Notice that \Name{BS-bcs} requires $\f{\Cluster_1} + \f{\Cluster_2} \leq \n{\Cluster_2}$. Hence, when using \Name{BS-bcs}, the number of faulty replicas in $\Cluster_1$ is upper-bounded by $\nf{\Cluster_2} \leq \n{\Cluster_2}$, this independent of the size of $\Cluster_1$.
\end{example}

Next, we show how to generalize bijective sending to arbitrary-sized clusters. We do so by \emph{partitioning} the larger-sized cluster into a set of smaller clusters, and then letting sufficient of these smaller clusters participate independent in bijective sending. First, we introduce the relevant partitioning notation.

\begin{definition}\label{def:cpart}
Let $\System$ be a system with $\Cluster  \in \System$, let $\mathcal{P}$ be a subset of the replicas in $\System$, let $c > 0$ be a constant, let $q = \n{\Cluster} \div c$, and let $r = \n{\Cluster} \bmod c$. A \emph{$c$-partition} $\Partition{\mathcal{P}} = \{ P_1, \dots, P_q, P'\}$ of $\mathcal{P}$ is a partition of the set of replicas $\mathcal{P}$ into sets $P_1, \dots, P_q, P'$ such that $\n{P_i} = c$, $1 \leq i \leq q$, and $\n{P'} = r$.
\end{definition}

\begin{example}\label{ex:cpart}
Let $\System$ be a system, let $\Cluster = \{\Replica_1, \dots, \Replica_{11} \} \in \System$, and let $\Faulty{\Cluster} = \{ \Replica_1, \dots, \Replica_5 \}$. The set $\Partition{\Cluster} = \{ P_1, P_2, P' \}$ with $P_1 = \{ \Replica_1, \dots, \Replica_4 \}$, $P_2 = \{ \Replica_5, \dots, \Replica_8 \}$, and $P' = \{\Replica_9, \Replica_{10}, \Replica_{11} \}$ is a $4$-partition of $\Cluster$. The cluster $\Cluster$ and the partition $\Partition{\Cluster}$ are illustrated in Figure~\ref{fig:ex_cpart}. We have $\Faulty{P_1} = P_1$, $\NonFaulty{P_1} = \emptyset$, and $\n{P_1} = \f{P_1} = 4$. Likewise, we have $\Faulty{P_2} = \{ \Replica_5 \}$, $\NonFaulty{P_2} = \{ \Replica_6, \Replica_7, \Replica_8 \}$, $\n{P_2} = 4$, and $\f{P_2} = 1$.

\begin{figure}[t!]
    \begin{minipage}{0.4\textwidth}
        \centering
        \begin{tikzpicture}[scale=0.8,every node/.style={transform shape}]
            \node[above,node_text] at (1.5, 2.7) {Cluster $\Cluster$:};
            \draw[rounded corners,thick,fill=black!10] (-1.2, -1) rectangle (4, 2.7);
            \draw[rounded corners,draw=red!60,fill=red!60] (-0.6,  -0.6) rectangle (3.6, 0.4);
            \draw[rounded corners,draw=red!60,fill=red!60] (-0.6,  -0.6) rectangle (0.6, 1.4);
            \draw[rounded corners,semithick,fill=blue!70,fill opacity=0.5] (-0.5, -0.5) rectangle (3.5, 0.3);
            \draw[rounded corners,semithick,fill=blue!70,fill opacity=0.5] (-0.5,  0.5) rectangle (3.5, 1.3);
            \draw[rounded corners,semithick,fill=blue!70,fill opacity=0.5] (-0.5,  1.5) rectangle (2.5, 2.3);

            \node[left,node_text] at (-0.5, -0.1) {$P_1$};
            \node[left,node_text] at (-0.5,  0.9) {$P_2$};
            \node[left,node_text] at (-0.5,  1.9) {$P'$};

            \node[dot] (r1)  at (0, 0) {};
            \node[dot] (r2)  at (1, 0) {};
            \node[dot] (r3)  at (2, 0) {};
            \node[dot] (r4)  at (3, 0) {};
            \node[dot] (r5)  at (0, 1) {};
            \node[dot] (r6)  at (1, 1) {};
            \node[dot] (r7)  at (2, 1) {};
            \node[dot] (r8)  at (3, 1) {};
            \node[dot] (r9)  at (0, 2) {};
            \node[dot] (r10) at (1, 2) {};
            \node[dot] (r11) at (2, 2) {};

            \node[below=-3pt,label] at (r1)  {$\Replica_{1}$};
            \node[below=-3pt,label] at (r2)  {$\Replica_{2}$};
            \node[below=-3pt,label] at (r3)  {$\Replica_{3}$};
            \node[below=-3pt,label] at (r4)  {$\Replica_{4}$};
            \node[below=-3pt,label] at (r5)  {$\Replica_{5}$};
            \node[below=-3pt,label] at (r6)  {$\Replica_{6}$};
            \node[below=-3pt,label] at (r7)  {$\Replica_{7}$};
            \node[below=-3pt,label] at (r8)  {$\Replica_{8}$};
            \node[below=-3pt,label] at (r9)  {$\Replica_{9}$};
            \node[below=-3pt,label] at (r10) {$\Replica_{10}$};
            \node[below=-3pt,label] at (r11) {$\Replica_{11}$};
        \end{tikzpicture}
    \end{minipage}
    \hfill
    \begin{minipage}{0.575\textwidth}
        \caption{An example of a $4$-partition of a cluster $\Cluster$ with $11$ replicas, of which the first five are faulty. The three partitions are grouped in blue boxes, the faulty replicas are highlighted using a red background.}\label{fig:ex_cpart}
    \end{minipage}
\end{figure}
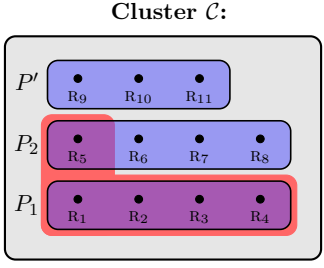
\end{example}

Having introduced partitioning, we are ready to generalize bijective sending to non-similar-sized clusters. Let $\Cluster_1$ be the sending cluster and $\Cluster_2$ be the receiving cluster. First, we consider the case $\n{\Cluster_2} \leq \f{\Cluster_1} + \f{\Cluster_2}$. The pseudo-code for the protocol, named \Name{SPBS-($\alpha$,$\zeta$)}, can be found in Figure~\ref{fig:spbs}. Next, we prove the correctness of specific instances of the protocol for Byzantine systems that provide either cluster signing or replica signing:
\begin{figure}[t!]
    \begin{minipage}{0.7\textwidth}
        \begin{myprotocol}
            \TITLE{Protocol for the sending cluster $\Cluster_1$}
            \STATE The agreement step of \Name{BS-$\zeta$} for value $v$.
            \STATE Choose replicas $\mathcal{P} \subseteq \Cluster_1$ with $\n{P} = \alpha$.
            \STATE Choose a $\n{\Cluster_2}$-partition $\Partition{\mathcal{P}}$ of $\mathcal{P}$.
            \FOR{$P \in \Partition{\mathcal{P}}$}
                \STATE Choose replicas $Q \subseteq \Cluster_2$ with $\n{Q} = \n{P}$.
                \STATE Choose a bijection $b : P \rightarrow Q$.
                \FOR{$\Replica_1 \in P$}
                    \STATE Send $v$ from $\Replica_1$ to $b(\Replica_1)$ via the send step of \Name{BS-$\zeta$}.
                \ENDFOR
            \ENDFOR
            \SPACE
            \TITLE{Protocol for the receiving cluster $\Cluster_2$}
            \STATE See the protocol for the receiving cluster in \Name{BS-$\zeta$}.
        \end{myprotocol}
    \end{minipage}
    \hfill
    \begin{minipage}{0.285\textwidth}
        \caption{\Name{SPBS-($\alpha$,$\zeta$)}, $\zeta \in \{ \Name{bcs}, \Name{brs} \}$, the sender-partitioned bijective sending cluster-sending protocol that sends a value $v$ from $\Cluster_1$ to $\Cluster_2$. We assume the same system properties as \Name{BS-$\zeta$}.}\label{fig:spbs}
    \end{minipage}
\end{figure}

\begin{proposition}\label{prop:spbs}
Let $\System$ be a system with Byzantine failures and let $\Cluster_1, \Cluster_2 \in \System$, $\sigma_1$ as defined in Theorem~\ref{thm:crash_opti}, and $\tau_1$ as defined in Theorem~\ref{thm:cert_opti}.
\begin{enumerate}[nosep]
    \item If $\System$ provides cluster signing and $\sigma_1 \leq \n{\Cluster_1}$, then \Name{SPBS-($\sigma_1$,bcs)} satisfies Definition~\ref{def:csp}. The protocol sends $\sigma_1$ messages, of size $\BigO{\Size{v}}$ each, between $\Cluster_1$ and $\Cluster_2$.
    \item If $\System$ provides replica signing and $\tau_1 \leq \n{\Cluster_1}$, then \Name{SPBS-($\tau_1$,brs)} satisfies Definition~\ref{def:csp}. The protocol sends $\tau_1$ messages, of size $\BigO{\Size{v}}$ each, between $\Cluster_1$ and $\Cluster_2$.
\end{enumerate}
\end{proposition}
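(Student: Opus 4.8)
The plan is to treat \Name{SPBS-($\alpha$,$\zeta$)} as $q_1+1$ independent bijective sends running in parallel---one per block of the $\n{\Cluster_2}$-partition $\Partition{\mathcal{P}}$---and to show that, with $\alpha$ set to the optimal bound $\sigma_1$ (resp.\ $\tau_1$), enough messages travel from non-faulty senders in $\Cluster_1$ to non-faulty receivers in $\Cluster_2$ to bootstrap the receiving-side logic inherited from \Name{BS-bcs} (resp.\ \Name{BS-brs}). The entire content is a careful count of these \emph{good} messages, reusing the arithmetic of Theorems~\ref{thm:crash_opti} and~\ref{thm:cert_opti}.

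First I would unpack the partition structure. Writing $\alpha = \sigma_1 = q_1\n{\Cluster_2} + r_1 + \f{\Cluster_2}\sgn{r_1}$ and using $r_1 < \nf{\Cluster_2}$, one checks that $\sigma_1 \bmod \n{\Cluster_2} = r_1 + \f{\Cluster_2}\sgn{r_1} < \n{\Cluster_2}$, so $\Partition{\mathcal{P}}$ consists of exactly $q_1$ full blocks of size $\n{\Cluster_2}$ together with, when $r_1 > 0$, a single leftover block of size $r_1 + \f{\Cluster_2}$. Since $\sigma_1 \leq \n{\Cluster_1}$ by hypothesis, the set $\mathcal{P}$ with $\n{\mathcal{P}} = \alpha$ exists; each block $P$ admits a target $Q \subseteq \Cluster_2$ with $\n{Q} = \n{P}$, and every full block maps bijectively onto all of $\Cluster_2$.

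The heart of the argument is the counting of good messages. As $\Faulty{\Cluster_1}$ has at most $\f{\Cluster_1}$ replicas, at least $\nf{\mathcal{P}} \geq \alpha - \f{\Cluster_1}$ senders are non-faulty, and each sends exactly one message. A good message fails only when its recipient is faulty, so the number of lost good messages is at most the total number of messages addressed to $\Faulty{\Cluster_2}$. In each full block the target set is all of $\Cluster_2$, so the $\f{\Cluster_2}$ faulty receivers absorb exactly $q_1\f{\Cluster_2}$ messages across the $q_1$ full blocks, while the leftover block contributes at most $\f{\Cluster_2}\sgn{r_1}$ further. Hence the number of good messages is at least
\[
    \nf{\mathcal{P}} - q_1\f{\Cluster_2} - \f{\Cluster_2}\sgn{r_1} \geq \alpha - \f{\Cluster_1} - q_1\f{\Cluster_2} - \f{\Cluster_2}\sgn{r_1} = q_1\nf{\Cluster_2} + r_1 - \f{\Cluster_1},
\]
using $\n{\Cluster_2} - \f{\Cluster_2} = \nf{\Cluster_2}$. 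Substituting the defining identity $q_1\nf{\Cluster_2} + r_1 = \f{\Cluster_1} + 1$ from Theorem~\ref{thm:crash_opti} gives at least one good message, which is what the cluster-signing receiver logic of \Name{BS-bcs} needs. For part~(2) the computation is identical with $\tau_1$ in place of $\sigma_1$; the identity $q_1\nf{\Cluster_2} + r_1 = 2\f{\Cluster_1} + 1$ from Theorem~\ref{thm:cert_opti} then yields at least $\f{\Cluster_1}+1$ good messages from $\f{\Cluster_1}+1$ \emph{distinct} non-faulty senders (the blocks are disjoint), hence $\f{\Cluster_1}+1$ distinct certificates $\Cert{v}{\Replica_1}$ reaching non-faulty receivers, exactly as required by \Name{BS-brs}.

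In both cases, each non-faulty recipient rebroadcasts inside $\Cluster_2$, so \emph{receipt}, \emph{confirmation}, and \emph{agreement} follow verbatim from Proposition~\ref{prop:bs_bcs} (resp.\ Proposition~\ref{prop:bs_brs}). The message count is exactly $\n{\mathcal{P}} = \alpha$ since each replica of $\mathcal{P}$ lies in one block and sends one message, and each message carries $(v, \Cert{v}{\Cluster_1})$ (resp.\ $(v, \Cert{v}{\Replica_1})$) of size $\BigO{\Size{v}}$. I expect the main obstacle to be the bookkeeping in the counting step: bounding, for the worst-case placement of faulty replicas, how many good messages the $\f{\Cluster_2}$ faulty receivers can absorb across the full blocks and the leftover block, and verifying that this bound matches the $\f{\Cluster_2}\sgn{r_1}$ term so that the algebra collapses \emph{exactly} to $\f{\Cluster_1}+1$ (resp.\ $2\f{\Cluster_1}+1$). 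This tight match is precisely what makes the protocol meet the lower bounds, so the identities must be tracked exactly rather than with slack inequalities.
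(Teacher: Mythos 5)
Your proof is correct and matches the paper's argument essentially line for line: the paper likewise counts, per partition block, the senders whose bijective targets are faulty (its sets $z(P)$, totalling $q_1\f{\Cluster_2} + \f{\Cluster_2}\sgn{r_1}$), subtracts these together with the at most $\f{\Cluster_1}$ faulty senders from $\alpha$, and uses the division identity $q_1\nf{\Cluster_2} + r_1 = \f{\Cluster_1} + 1$ (resp.\ $2\f{\Cluster_1} + 1$) to obtain at least one (resp.\ at least $\f{\Cluster_1} + 1$ distinct) deliveries from non-faulty senders to non-faulty receivers, before deferring \emph{receipt}, \emph{confirmation}, and \emph{agreement} to the broadcast-protocol propositions. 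The only cosmetic differences are that the paper treats both cases uniformly via a parameter $\beta \in \{\f{\Cluster_1}+1,\, 2\f{\Cluster_1}+1\}$ and removes the senders with faulty targets before the faulty senders, whereas you perform the same subtractions in the opposite order.
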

\begin{proof}
Let $\beta = (\f{\Cluster_1} + 1)$ in the case of cluster signing and let $\beta = (2\f{\Cluster_1} + 1)$ in the case of replica signing. Let $q = \beta \div \nf{\Cluster_2}$ and $r = \beta \bmod \nf{\Cluster_2}$. We have $\alpha = q\n{\Cluster_2} + r + \f{\Cluster_2}\sgn{r}$. Choose $\mathcal{P}$ and choose $\Partition{\mathcal{P}} = \{ P_1, \dots, P_{q}, P' \}$ in accordance with \Name{SPBS-($\alpha$,$\zeta$)}. For each $P \in \mathcal{P}$, choose a $Q$ and $b$ in accordance with \Name{SPBS-($\alpha$,$\zeta$)}, and let $z(P) = \{ \Replica \in P \mid b(\Replica) \in \Faulty{Q} \}$. As each such $b$ has a distinct domain, the union of them is a surjection $f : \mathcal{P} \rightarrow \n{\Cluster_2}$. By construction, we have $\n{P'} =  r + \f{\Cluster_2}\sgn{r}$, $\n{z(P')} \leq \f{\Cluster_2}\sgn{r}$, and, for every $i$, $1 \leq i \leq q$, $\n{P_i} = \n{\Cluster_2}$ and $\n{z(P_i)} = \f{\Cluster_2}$. Let $V = \mathcal{P} \difference \bigl(\bigcup_{P \in \Partition{\mathcal{P}}}\ z(P)\bigr)$. We have 
\[
    \n{V} \geq \n{\mathcal{P}} - (q\f{\Cluster_2} + \f{\Cluster_2}\sgn{r})
          = (q\n{\Cluster_2} + r + \f{\Cluster_2}\sgn{r}) - (q\f{\Cluster_2} + \f{\Cluster_2}\sgn{r})
          = q\nf{\Cluster_2} + r = \beta.
\]
Let $T = \{ f(\Replica) \mid \Replica \in \NonFaulty{V} \}$.  By construction, we have $\nf{T} = \n{T}$. To complete the proof, we consider cluster signing and replica signing separately. First, the case for cluster signing. As $\n{V} \geq \beta = \f{\Cluster_1} + 1$, we have $\nf{V} \geq 1$. By construction, the replicas in $\NonFaulty{T}$ will receive the messages $(v, \Cert{v}{\Cluster_1})$ from the replicas $\Replica_1 \in V$. Hence, analogous to the proof of Proposition~\ref{prop:rb_bcs}, we can prove \emph{receipt}, \emph{confirmation}, and \emph{agreement}. Finally, the case for replica signing. As $\n{V} \geq \beta = 2\f{\Cluster_1} + 1$, we have $\nf{V} \geq \f{\Cluster_1} + 1$.  By construction, the replicas in $\NonFaulty{T}$ will receive the messages $(v, \Cert{v}{\Replica_1})$ from each replica $\Replica_1 \in V$. Hence, analogous to the proof of Proposition~\ref{prop:rb_brs}, we can prove \emph{receipt}, \emph{confirmation}, and \emph{agreement}.
\end{proof}

Next, we consider the case $\n{\Cluster_1} \leq \f{\Cluster_1} + \f{\Cluster_2}$. The pseudo-code for the protocol, named \Name{RPBS-($\alpha$,$\zeta$)}, can be found in Figure~\ref{fig:rpbs}. Next, we prove the correctness of specific instances of the protocol for Byzantine systems that provide either cluster signing or replica signing:

\begin{figure}[t!]
    \begin{minipage}{0.7\textwidth}
        \begin{myprotocol}
            \TITLE{Protocol for the sending cluster $\Cluster_1$}
            \STATE The agreement step of \Name{BS-$\zeta$} for value $v$.
            \STATE Choose replicas $\mathcal{P} \subseteq \Cluster_2$ with $\n{P} = \alpha$.
            \STATE Choose a $\n{\Cluster_1}$-partition $\Partition{\mathcal{P}}$ of $\mathcal{P}$.
            \FOR{$P \in \Partition{\mathcal{P}}$}
                \STATE Choose replicas $Q \subseteq \Cluster_1$ with $\n{Q} = \n{P}$.
                \STATE Choose a bijection $b : Q \rightarrow P$.
                \FOR{$\Replica_1 \in Q$}
                    \STATE Send $v$ from $\Replica_1$ to $b(\Replica_1)$ via the send step of \Name{BS-$\zeta$}.
                \ENDFOR
            \ENDFOR
            \SPACE
            \TITLE{Protocol for the receiving cluster $\Cluster_2$}
            \STATE See the protocol for the receiving cluster in \Name{BS-$\zeta$}.
        \end{myprotocol}
    \end{minipage}
    \hfill
    \begin{minipage}{0.285\textwidth}
        \caption{\Name{RPBS-($\alpha$,$\zeta$)}, $\zeta \in \{ \Name{bcs}, \Name{brs} \}$, the receiver-partitioned bijective sending cluster-sending protocol that sends a value $v$ from $\Cluster_1$ to $\Cluster_2$. We assume the same system properties as \Name{BS-$\zeta$}.}\label{fig:rpbs}
    \end{minipage}
\end{figure}

\begin{proposition}\label{prop:rpbs}
Let $\System$ be a system with Byzantine failures and let $\Cluster_1, \Cluster_2 \in \System$, $\sigma_2$ as defined in Theorem~\ref{thm:crash_opti}, and $\tau_2$ as defined in Theorem~\ref{thm:cert_opti}.
\begin{enumerate}[nosep]
    \item If $\System$ provides cluster signing and $\sigma_2 \leq \n{\Cluster_2}$, then \Name{RPBS-($\sigma_2$,bcs)} satisfies Definition~\ref{def:csp}. The protocol sends $\sigma_2$ messages, of size $\BigO{\Size{v}}$ each, between $\Cluster_1$ and $\Cluster_2$.
    \item If $\System$ provides replica signing and $\tau_2 \leq \n{\Cluster_2}$, then \Name{RPBS-($\tau_2$,brs)} satisfies Definition~\ref{def:csp}. The protocol sends $\tau_2$ messages, of size $\BigO{\Size{v}}$ each, between $\Cluster_1$ and $\Cluster_2$.
\end{enumerate}
\end{proposition}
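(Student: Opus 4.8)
The plan is to treat \Name{RPBS} as the mirror image of \Name{SPBS} (Proposition~\ref{prop:spbs}): where \Name{SPBS} partitions the oversized \emph{sending} cluster, \Name{RPBS} partitions a chosen target set $\mathcal{P} \subseteq \Cluster_2$ of size $\alpha$ into chunks of size $\n{\Cluster_1}$ and lets (a subset of) $\Cluster_1$ bijectively address each chunk. I would first record the shape of the $\n{\Cluster_1}$-partition $\Partition{\mathcal{P}} = \{P_1, \dots, P_q, P'\}$: with $q$ full chunks (each forcing $Q = \Cluster_1$, since $\n{P_i} = \n{\Cluster_1}$) and one short chunk $P'$. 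Checking that the remainder term ($r + \f{\Cluster_1}\sgn{r}$ for cluster signing, $r + 2\f{\Cluster_1}\sgn{r}$ for replica signing) is strictly below $\n{\Cluster_1}$ confirms there are exactly $q$ full chunks. The total number of messages is then $\sum_{P}\n{P} = \n{\mathcal{P}} = \sigma_2$ (resp. $\tau_2$), each of size $\BigO{\Size{v}}$, giving the stated complexity, and the hypothesis $\sigma_2 \le \n{\Cluster_2}$ (resp. $\tau_2 \le \n{\Cluster_2}$) is exactly what lets us pick $\mathcal{P} \subseteq \Cluster_2$.

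Correctness then reduces, as in Propositions~\ref{prop:rb_bcs} and~\ref{prop:rb_brs}, to delivering enough certificates from non-faulty senders to non-faulty receivers: one valid $\Cert{v}{\Cluster_1}$ for cluster signing, and $\f{\Cluster_1}+1$ certificates from \emph{distinct} non-faulty signers for replica signing (after which \emph{receipt}, \emph{confirmation}, and \emph{agreement} follow from the receiving protocols of \Name{BS-bcs}/\Name{BS-brs}). For cluster signing I would mirror \Name{SPBS} directly: each full chunk loses at most $\f{\Cluster_1}$ targets to faulty senders and the short chunk at most $\f{\Cluster_1}\sgn{r}$, so at least $q\nf{\Cluster_1} + r = \f{\Cluster_2}+1$ targets receive a valid $\Cert{v}{\Cluster_1}$; subtracting the at-most-$\f{\Cluster_2}$ faulty receivers in $\Cluster_2$ leaves at least one non-faulty receiver holding a valid cluster certificate, which it broadcasts.

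The genuinely harder step---and where \Name{RPBS} is \emph{not} a transparent mirror of \Name{SPBS}---is the replica-signing case, because the full chunks all reuse the same senders $\Cluster_1$, so a single signer may address many distinct non-faulty receivers; counting good \emph{receivers} (as \Name{SPBS} counts good senders) no longer counts distinct \emph{signers}. I would instead bound the signers directly. Let $B \subseteq \NonFaulty{\Cluster_1}$ be the non-faulty senders that are \emph{fully blocked}, i.e., every receiver they address across the partition is faulty; then the distinct non-faulty signers reaching a non-faulty receiver number $\nf{\Cluster_1} - \abs{B}$. The key observation is that the targets addressed by $B$ are pairwise distinct (a bijection is injective within each chunk, and chunks are disjoint), so they account for $q\abs{B} + \abs{B \intersect Q'}$ distinct faulty receivers, whence $\f{\Cluster_2} \ge q\abs{B} + \abs{B \intersect Q'}$. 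Assuming $\abs{B} \ge \nf{\Cluster_1}-\f{\Cluster_1}$ and using $\abs{B \intersect Q'} \ge \abs{B} - (\n{\Cluster_1}-\n{Q'})$ together with the defining identity $\f{\Cluster_2} = q(\nf{\Cluster_1}-\f{\Cluster_1}) + r - 1$ forces $\f{\Cluster_2} \ge q(\nf{\Cluster_1}-\f{\Cluster_1}) + r > \f{\Cluster_2}$, a contradiction (the case $r=0$, where $P'$ is empty, is identical). Hence $\abs{B} \le \nf{\Cluster_1}-\f{\Cluster_1}-1$ and $\nf{\Cluster_1}-\abs{B} \ge \f{\Cluster_1}+1$.

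Finishing, the distinct certificates held by those non-faulty receivers are rebroadcast inside $\Cluster_2$, so every non-faulty replica eventually collects $\f{\Cluster_1}+1$ certificates from distinct signers, establishing \emph{receipt} and \emph{confirmation}; \emph{agreement} follows because at most $\f{\Cluster_1}$ certificates exist for any non-agreed value, exactly as in Proposition~\ref{prop:rb_brs}. I expect the distinct-signer counting to be the main obstacle, since it is the one genuine departure from the \Name{SPBS} argument; the remainder-chunk bookkeeping (the $\sgn{r}$ terms and the short chunk $P'$) should be fiddly but routine.
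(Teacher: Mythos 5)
Your proposal is correct. The message counting and the cluster-signing case match the paper's proof essentially step for step (the paper likewise removes, per chunk, the at most $\f{\Cluster_1}$ receivers paired with faulty senders---its sets $z(P)$---and obtains $q\nf{\Cluster_1} + r = \f{\Cluster_2}+1$ surviving receivers, hence at least one non-faulty one holding $\Cert{v}{\Cluster_1}$). Where you genuinely depart from the paper is the replica-signing case, and your instinct that this is the one non-mirrored step is right. The paper stays with receiver counting: it shows $\nf{T} \geq (q+\sgn{r})\f{\Cluster_1}+1$ for the set $T$ of receivers paired with non-faulty senders, then applies pigeonhole over the $q+\sgn{r}$ non-empty chunks of $\Partition{\mathcal{P}}$ to find a \emph{single} chunk $P$ containing at least $\f{\Cluster_1}+1$ such non-faulty receivers; distinctness of the corresponding signers is then immediate from injectivity of the one bijection on that chunk. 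You instead count signers globally: you bound the set $B$ of fully blocked non-faulty senders by noting that their targets form pairwise distinct faulty receivers (injectivity within a chunk, disjointness across chunks), giving $\f{\Cluster_2} \geq q\abs{B} + \abs{B \intersect Q'}$, and contradict the identity $\f{\Cluster_2}+1 = q(\nf{\Cluster_1}-\f{\Cluster_1})+r$ when $\abs{B} \geq \nf{\Cluster_1}-\f{\Cluster_1}$; I checked the arithmetic in both the $r>0$ and $r=0$ branches and it is sound, including the corner case $q=0$, where senders outside $Q'$ enter $B$ vacuously but the contradiction still closes because then $r = \f{\Cluster_2}+1$. The trade-off: the paper's pigeonhole localizes the $\f{\Cluster_1}+1$ distinct signers inside one chunk, making the distinct-signer bookkeeping trivial and the proof shorter given the $z(P)$/$T$ machinery already built for the cluster-signing case; your blocked-signer count is a sharper structural statement---it bounds exactly how many non-faulty signers the adversary can silence in total, and makes transparent why the slack $2\f{\Cluster_1}\sgn{r}$ in $\tau_2$ is precisely what is needed---at the cost of the slightly delicate estimate $\abs{B \intersect Q'} \geq \abs{B} - (\n{\Cluster_1} - \n{Q'})$ and the explicit degenerate-case checks. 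Either argument completes the proposition, since both deliver $\f{\Cluster_1}+1$ certificates from distinct non-faulty signers to non-faulty receivers, after which receipt, confirmation, and agreement follow from the \Name{RB-brs} receiving protocol exactly as you say.
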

\begin{proof}
Let $\beta = \nf{\Cluster_1}$ and $\gamma = 1$ in the case of cluster signing and let $\beta = (\nf{\Cluster_1} - \f{\Cluster_1})$ and $\gamma = 2$ in the case of replica signing. Let $q = (\f{\Cluster_2} + 1) \div \beta$ and $r = (\f{\Cluster_2} + 1) \bmod \beta$. We have $\alpha =  q\n{\Cluster_1} + r + \gamma\f{\Cluster_1}\sgn{r}$. Choose $\mathcal{P}$ and choose $\Partition{\mathcal{P}} = \{ P_1, \dots, P_{q}, P' \}$ in accordance with \Name{RPBS-($\alpha$,$\zeta$)}. For each $P \in \mathcal{P}$, choose a $Q$ and $b$ in accordance with \Name{RPBS-($\alpha$,$\zeta$)}, and let $z(P) = \{ \Replica \in P \mid b^{-1}(\Replica) \in \Faulty{Q} \}$. As each such $b^{-1}$ has a distinct domain, the union of them is a surjection $f^{-1} : \mathcal{P} \rightarrow \n{\Cluster_1}$. By construction, we have $\n{P'} =  r + \gamma\f{\Cluster_1}\sgn{r}$, $\n{z(P')} \leq \f{\Cluster_1}\sgn{r}$, and, for every $i$, $1 \leq i \leq q$, $\n{P_i} = \n{\Cluster_1}$ and $\n{z(P_i)} = \f{\Cluster_1}$. Let $T = \mathcal{P} \difference \big(\bigcup_{P \in \Partition{\mathcal{P}}}\ z(P)\big)$. We have 
\[ \n{T} \geq \n{\mathcal{P}} - (q\f{\Cluster_1} + \f{\Cluster_1}\sgn{r})
          = (q\n{\Cluster_1} + r + \gamma\f{\Cluster_1}\sgn{r}) - (q\f{\Cluster_1} + \f{\Cluster_1}\sgn{r})
          = q\nf{\Cluster_1} + r + (\gamma - 1)\f{\Cluster_1}\sgn{r}.
\]
To complete the proof, we consider cluster signing and replica signing separately.

First, the case for cluster signing. We have $\beta = \nf{\Cluster_1}$ and $\gamma = 1$. Hence,
    \[  \n{T} \geq q\nf{\Cluster_1} + r + (\gamma - 1)\f{\Cluster_1}\sgn{r}
                = q\beta + r = \f{\Cluster_2} + 1.
    \]
We have $\nf{T} \geq \n{T} - \f{\Cluster_2} \geq 1$. Let $V = \{ f^{-1}(\Replica) \mid \Replica \in \NonFaulty{T} \}$. By construction, we have $\nf{V} = \n{V}$ and we have $\nf{V} \geq 1$. Consequently, the replicas in $\NonFaulty{T}$ will receive the messages $(v, \Cert{v}{\Cluster_1})$ from the replicas $\Replica_1 \in V$. Analogous to the proof of Proposition~\ref{prop:rb_bcs}, we can prove \emph{receipt}, \emph{confirmation}, and \emph{agreement}.
        
Finally, the case for replica signing. We have $\beta = \nf{\Cluster_1} - \f{\Cluster_1}$ and $\gamma = 2$. Hence,
\begin{multline*}
            \n{T} \geq q\nf{\Cluster_1} + r + (\gamma - 1)\f{\Cluster_1}\sgn{r}
                  = q(\beta + \f{\Cluster_1}) + r + \f{\Cluster_1}\sgn{r}\\{}
                  = (q\beta + r) + q\f{\Cluster_1} + \f{\Cluster_1}\sgn{r}
                  = (\f{\Cluster_2} + 1) + q\f{\Cluster_1} + \f{\Cluster_1}\sgn{r}.
\end{multline*}
We have $\nf{T} \geq q\f{\Cluster_1} + \f{\Cluster_1}\sgn{r} + 1 = (q + \sgn{r})\f{\Cluster_1} + 1$. As there are $(q + \sgn{r})$ non-empty sets in $\Partition{\mathcal{P}}$, there must be a set $P \in \mathcal{P}$ with $\n{P \intersect \nf{T}} \geq \f{\Cluster_1} + 1$. Let $b$ be the bijection chosen earlier for $P$ and let $V = \{ b^{-1}(\Replica) \mid \Replica \in (P \intersect \nf{T}) \}$. By construction, we have $\nf{V} = \n{V}$ and we have $\nf{V} \geq \f{\Cluster_1} + 1$. Consequently, the replicas in $\NonFaulty{T}$ will receive the messages $(v, \Cert{v}{\Replica_1})$ from each replica $\Replica_1 \in V$. Hence, analogous to the proof of Proposition~\ref{prop:rb_brs}, we can prove \emph{receipt}, \emph{confirmation}, and \emph{agreement}.
\end{proof}

As with \Name{BS-brs}, \Name{SPBS-($\tau_1$,brs)}  and \Name{RPBS-($\tau_2$,brs)} can be optimized for the case in which replica certificates have constant size. In these cases, we only let $\sigma_1$ or $\sigma_2$ replicas send the value $v$, respectively, while all $\tau_1$ and $\tau_2$ replicas send a replica certificate, respectively. 

The bijective sending cluster-sending protocols, the sender-partitioned bijective cluster-sending protocols, and the receiver-partitioned bijective cluster-sending protocols each deal with differently-sized clusters. Furthermore, we can use the protocols designed with cluster certificates in mind also in the other cases using the cluster certificate emulation strategies discussed in Section~\ref{sec:formal}. By choosing the applicable protocols, we have the following:

\begin{theorem}\label{thm:opti_csp}
Let $\System$ be a system and let $\Cluster_1, \Cluster_2 \in \System$. Consider the cluster-sending problem in which $\Cluster_1$ sends a value $v$ to $\Cluster_2$.
\begin{enumerate}[nosep]
    \item If $\n{\Cluster} > 3\f{\Cluster}$, $\Cluster \in \System$, and $\System$ has crash failures, omit failures, or Byzantine failures and cluster signing, then \Name{BS-bcs}, \Name{SPBS-($\sigma_1$,bcs)}, and \Name{RPBS-($\sigma_2$,bcs)} are a solution to the cluster-sending problem with optimal message complexity. These protocols solve the cluster-sending problem using $\BigO{\max(\n{\Cluster_1}, \n{\Cluster_2})}$ messages, of size $\BigO{\Size{v}}$ each.
    \item If $\n{\Cluster} > 4\f{\Cluster}$, $\Cluster \in \System$, and $\System$ has Byzantine failures and replica sending,  then \Name{BS-brs}, \Name{SPBS-($\tau_1$,brs)}, and \Name{RPBS-($\tau_2$,brs)} are a solution to the cluster-sending problem  with optimal replica certificate usage. These protocols solve the cluster-sending problem using $\BigO{\max(\n{\Cluster_1}, \n{\Cluster_2})}$ messages, of size $\BigO{\Size{v}}$ each.
    \item If $\n{\Cluster} > 3\f{\Cluster}$, $\Cluster \in \System$, and $\System$ has Byzantine failures and replica sending, then \Name{BS-bcs}, \Name{SPBS-($\sigma_1$,bcs)}, and \Name{RPBS-($\sigma_2$,bcs)} are a solution to the cluster-sending problem. These protocols solve the cluster-sending problem using $\BigO{\max(\n{\Cluster_1}, \n{\Cluster_2})}$ messages, of size $\BigO{\Size{v} + \f{\Cluster_1}}$ each.
\end{enumerate}
\end{theorem}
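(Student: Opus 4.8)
The plan is to reduce Theorem~\ref{thm:opti_csp} to the correctness propositions of Sections~\ref{sec:bijective} and~\ref{sec:partitioned} together with the lower bounds of Theorems~\ref{thm:crash_opti} and~\ref{thm:cert_opti}, handling all three enumerated cases by a single template. I would split on which cluster is larger: if $\n{\Cluster_1} \geq \n{\Cluster_2}$ I select a sender-partitioned protocol (\Name{BS} when both clusters already meet the bijective-sending thresholds, \Name{SPBS} otherwise), and if $\n{\Cluster_2} \geq \n{\Cluster_1}$ I select a receiver-partitioned protocol (\Name{BS} or \Name{RPBS}). Here \Name{BS} is exactly the degenerate instance of the partitioned protocols in which the relevant partition has a single block, so it suffices to argue for \Name{SPBS} and \Name{RPBS} and to note that \Name{BS} is the sub-case where no partitioning is needed. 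I would also record that the two ``genuine'' partitioned regimes, $\nf{\Cluster_2} \leq \f{\Cluster_1}$ (sender side) and $\nf{\Cluster_1} \leq \f{\Cluster_2}$ (receiver side), cannot hold simultaneously: combined with $\nf{\Cluster} > 2\f{\Cluster}$ they would force $\nf{\Cluster_1} \leq \f{\Cluster_2} < \nf{\Cluster_2}/2 \leq \f{\Cluster_1}/2 < \nf{\Cluster_1}/4$, a contradiction. Hence the selection is well defined and exhaustive (the brs thresholds are the $2\f{\Cluster_1}$-analogues and behave identically).

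Given the protocol, correctness is immediate from Propositions~\ref{prop:bs_bcs}, \ref{prop:bs_brs}, \ref{prop:spbs}, and~\ref{prop:rpbs}, provided their hypotheses hold. The only nontrivial hypotheses are the size conditions $\sigma_1 \leq \n{\Cluster_1}$, $\tau_1 \leq \n{\Cluster_1}$ (for \Name{SPBS}) and $\sigma_2 \leq \n{\Cluster_2}$, $\tau_2 \leq \n{\Cluster_2}$ (for \Name{RPBS}), and verifying these from the robustness hypothesis is the crux of the proof. I would first rewrite the counts in closed form: using $q_1\nf{\Cluster_2} + r_1 = \f{\Cluster_1}+1$ one gets $\sigma_1 = \f{\Cluster_1} + 1 + \lceil (\f{\Cluster_1}+1)/\nf{\Cluster_2}\rceil \f{\Cluster_2}$, and replacing $\f{\Cluster_1}+1$ by $2\f{\Cluster_1}+1$ gives the analogue for $\tau_1$; symmetric identities hold for $\sigma_2$ and $\tau_2$, with $\f{\Cluster_1}$ in the role of $\f{\Cluster_2}$, denominators $\nf{\Cluster_1}$ and $\nf{\Cluster_1}-\f{\Cluster_1}$ respectively, and a factor $2$ multiplying $\f{\Cluster_1}$ in $\tau_2$. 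The robustness hypothesis then controls the tail ratio: $\n{\Cluster} > 3\f{\Cluster}$ gives $\f{\Cluster_2}/\nf{\Cluster_2} < 1/2$, while $\n{\Cluster} > 4\f{\Cluster}$ gives $\f{\Cluster_2}/\nf{\Cluster_2} < 1/3$.

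In the genuine sender-partitioned regime one moreover has $\f{\Cluster_2} < \f{\Cluster_1}/2$ (cluster signing) or $\f{\Cluster_2} < 2\f{\Cluster_1}/3$ (replica signing), and combining these with the ratio bounds yields $\sigma_1 < \n{\Cluster_1}$ under $\n{\Cluster} > 3\f{\Cluster}$ and $\tau_1 < \n{\Cluster_1}$ under $\n{\Cluster} > 4\f{\Cluster}$; the analogous estimates give $\sigma_2 \leq \n{\Cluster_2}$ and $\tau_2 \leq \n{\Cluster_2}$. This is precisely where the two thresholds diverge: the extra factor $2\f{\Cluster_1}$ in $\tau_1, \tau_2$ forces the tighter ratio $1/3$, hence the ``$> 4\f{\Cluster}$'' in case~2 against ``$> 3\f{\Cluster}$'' in cases~1 and~3. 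I expect the careful handling of the ceiling and $\sgn$ terms, and of the small additive constants (so that the strict inequalities survive even for small $\f{\Cluster}$, with $\f{\Cluster}=0$ falling into the \Name{BS} sub-case), to be the main obstacle.

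For the complexity claim I would observe that \Name{SPBS} and \Name{RPBS} send exactly $\sigma_i$ or $\tau_i$ messages by their propositions, and that in the \Name{BS} regime the ceiling equals $1$, so \Name{BS-bcs} sends $\f{\Cluster_1}+\f{\Cluster_2}+1 = \sigma_i$ messages and \Name{BS-brs} sends $2\f{\Cluster_1}+\f{\Cluster_2}+1 = \tau_i$ certificates; thus in every regime the count is exactly $\sigma_i$ (cases~1 and~3) or $\tau_i$ (case~2), and the closed-form bounds above show these are $\BigO{\max(\n{\Cluster_1}, \n{\Cluster_2})}$. Optimality then follows from the lower bounds: since crash behavior is a restriction of omission and Byzantine behavior, any correct protocol must exchange at least $\sigma_i$ messages by Theorem~\ref{thm:crash_opti} (cases~1 and~3) and, under replica signing, at least $\tau_i$ certificates by Theorem~\ref{thm:cert_opti} (case~2), matching the achieved counts. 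Finally, for case~3 I would invoke the cluster-certificate emulation of Section~\ref{sec:formal}: running the bcs protocols under mere replica signing only inflates each certificate to size $\BigO{\Size{v} + \f{\Cluster_1}}$, which accounts for the stated message size and explains why no optimality is asserted there.
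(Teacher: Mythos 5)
Your proposal is correct and takes essentially the same route as the paper, which states Theorem~\ref{thm:opti_csp} without a separate proof as a direct assembly of Propositions~\ref{prop:bs_bcs}, \ref{prop:bs_brs}, \ref{prop:spbs}, and~\ref{prop:rpbs}, the certificate-emulation remarks of Section~\ref{sec:formal}, and the lower bounds of Theorems~\ref{thm:crash_opti} and~\ref{thm:cert_opti}. Your closed forms (e.g., $\sigma_1 = \f{\Cluster_1} + 1 + \lceil (\f{\Cluster_1}+1)/\nf{\Cluster_2} \rceil \f{\Cluster_2}$), the observation that \Name{BS} is exactly the single-block degenerate case in which the ceiling equals $1$, and the verification that $\n{\Cluster} > 3\f{\Cluster}$ (resp.\ $\n{\Cluster} > 4\f{\Cluster}$) forces the applicability conditions $\sigma_i \leq \n{\Cluster_i}$ (resp.\ $\tau_i \leq \n{\Cluster_i}$) correctly fill in the one step the paper leaves implicit, and they check out.
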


\section{Conclusions and discussion}\label{sec:conclude}

In this paper, we formalized the \emph{cluster-sending problem}, the problem of sending messages between clusters that can have faulty replicas. We proved fundamental lower bounds on the complexity of the cluster-sending problem. We also developed two powerful techniques, bijective sending and partitioned bijective sending, that can be used in the construction of practical cluster-sending protocols with optimal complexity. Our work provides a strong foundation for the further exploration of novel designs that address challenges encountered in resilient distributed systems. These fundamental results open a number of key research avenues. 

First, the optimal protocols we propose apply to most practical situations, but in some extreme cases only the straightforward broadcasting-based protocols are applicable. It remains open whether we can improve on these broadcast-based protocols in all cases. Second, based on the assumptions made in this paper, we also foresee three fundamental opportunities for further study and development:
\begin{enumerate}[label=(\arabic*)]
\item The presence of \emph{public-key cryptography} (replica signing or cluster signing). Without these tools, each replica can only reliable detect the sender of messages it receives from other non-faulty replicas and forwarding messages becomes much harder. Hence, we can only imagine a significant increase in the complexity of the cluster-sending problem.
\item We operate in a fully \emph{dynamic failure model} in which the set of faulty replicas is ever changing. The leader-less protocols we designed operate perfectly under this restriction. In many practical settings the set of faulty replicas is relatively stable, however. It remains open to what degree cluster-sending can be optimized to such an optimistic assumption about failures to reduce the \emph{expected} complexity. As an example, we mention the usage of a dedicated reliable \emph{leader} responsible for coordinating incoming and outgoing communication. Such a design, with all its challenges, has already seen limited usage in scalable BFT systems such as Steward~\cite{steward}.
\item Going beyond \emph{reliable networks}. Assuming that the network is reliable enabled us to design one-way protocols without any message acknowledgement phases. Consequently, the protocols we present leverage network reliability to provide \emph{confirmation}. Alternatively, our protocols can be extended to provide a best-case effort to detect and recover from network unreliability (as far as possible~\cite{cap12,capproof,capthm}), which necessitates communication in both directions and will affect the lower bounds on the complexity of cluster-sending. 
\end{enumerate}

\bibliographystyle{plainurl}
\bibliography{sources}

\end{document}